\DeclarePairedDelimiter\floor{\lfloor}{\rfloor}
\newcommand{\ind}{\perp\!\!\!\!\perp}
\newcommand{\e}{\mathbb{E}}
\newcommand\numberthis{\addtocounter{equation}{1}\tag{\theequation}}
 \def\@textbottom{\vskip \z@ \@plus 1pt}
 \let\@texttop\relax
\begin{document}

\title{CITS: Nonparametric Statistical Causal Modeling for High‑Resolution Time Series}

\author{Rahul Biswas\inst{1,5} \and SuryaNarayana Sripada\inst{2} \and Somabha Mukherjee\inst{3} \and Reza Abbasi-Asl\inst{1,4,5,}\thanks{Corresponding author: Reza.AbbasiAsl@ucsf.edu}}

\institute{
\hspace{-.4em}Department of Neurology, University of California, San Francisco, CA 94158, USA.
\and
\hspace{-.4em}Center for Research on Science and Consciousness, Redmond, WA 98052, USA.
\and
\hspace{-.4em}Department of Statistics and Data Science, National University of Singapore, 117546, Singapore.
\and
\hspace{-.4em}Department of Bioengineering and Therapeutic Sciences, University of California, San Francisco, CA 94158, USA.
\and
\hspace{-.4em}UCSF Weill Institute for Neurosciences, San Francisco, CA 94158, USA.
}



\maketitle

\begin{abstract} Identifying causal interactions in complex dynamical systems is a fundamental challenge across the computational sciences. Existing functional connectivity methods capture correlations but not causation. While addressing directionality, popular causal inference tools such as Granger causality and the Peter–Clark algorithm rely on restrictive assumptions that limit their applicability to high-resolution time-series data, such as the large-scale recordings now standard in neuroscience. Here, we introduce CITS (Causal Inference in Time Series), a nonparametric framework for inferring statistically causal structure from multivariate time series. CITS models dynamics using a structural causal model of arbitrary Markov order and statistical tests for lagged conditional independence. We prove consistency under mild assumptions and demonstrate superior accuracy over state-of-the-art baselines across simulated linear, nonlinear, and recurrent neural network benchmarks. Applying CITS to large-scale neuronal recordings from the mouse visual cortex, thalamus, and hippocampus, we uncover stimulus-specific causal pathways and inter-regional hierarchies that align with known anatomy while revealing new functional insights. We further highlight CITS ability in accurately identifying conditional dependencies within small inferred neuronal motifs. These results establish CITS as a theoretically grounded and empirically validated method for discovering interpretable statistically causal networks in neural time series. Beyond neuroscience, the framework is broadly applicable to causal discovery in complex temporal systems across domains.

\end{abstract}

\section{Introduction}
Inferring causality from time series data is a fundamental challenge across disciplines such as neuroscience \cite{reid2019advancing,biswas2022statistical1}, econometrics \cite{gokmenoglu2019time}, climatology \cite{barbero2018temperature}, and geosciences \cite{massmann2021causal}. In neuroscience, understanding how neural interactions give rise to perception, cognition, and behavior requires identifying directed influences between neurons-termed the causal functional connectome (CFC) \cite{reid2019advancing,biswas2022statistical1}. The CFC comprises a directed graph in which an edge from neuron $u$ to $v$ indicates that $u$'s activity at time $t$ has a causal effect on $v$'s activity at a later time $t'$. Reconstructing this graph offers critical insights into brain computation and may serve as a biomarker for neurological disorders such as Alzheimer’s disease \cite{nakamura2017early,biswas2023causalfcn,biswas2024causalAD}.

While identifying the true causal relationships in the brain requires extensive and expensive bench experiments, statistical modeling offers an alternative solution to profile putative relationships that are most likely to be causal. Most existing methods for time series causal inference rely on parametric models that impose structural assumptions on the data-generating process. Classical implementations of Granger Causality, for example, typically assume linear vector autoregressive (VAR) models with Gaussian noise \cite{barnett2009granger,granger2001essays}. Extensions to non-linear or additive models exist \cite{chu2008search}, but these remain constrained by fixed functional forms. Non-parametric approaches avoid such limitations by inferring causal graphs through statistical independence testing rather than explicit dynamical modeling. These include methods based on directed graphical models, such as the Time-Aware Peter-Clark (TPC) algorithm \cite{biswas2022statistical2,spirtes2000causation}, and structural causal models (SCMs) \cite{peters2013causal}. A detailed review is provided in \textbf{Supplementary~\ref{appsec:cinf_tsreview}}.

Additionally, most CFC methods-whether based on correlation, coherence, or mutual information, do not resolve the directionality of influence or the underlying mechanisms through which information propagates across brain areas \cite{smith2011network}. This limitation has spurred interest in statistically causal connectivity approaches that aim to infer the directed influence of one neural unit on another. Granger Causality (GC) has been widely used for this purpose, but its reliance on linear autoregressive modeling and assumptions of Gaussian noise limits its applicability to complex or high-frequency neural data \cite{barnett2009granger}. Moreover, classical methods based on Peter-Clark (PC) algorithm, while offering a non-parametric alternative, typically assume i.i.d. observations and are not designed to capture temporally lagged dependencies in time series \cite{kalisch2007estimating,spirtes2000causation}.

To address these challenges, the structural causal modeling (SCM) framework offers a principled solution for statistical causal inference with several key advantages: 1. It provides a graphical approach that is interpretable and facilitates the verification of causal relationships of interest; 2. It avoids reliance on strong distributional or functional assumptions often imposed in time series analysis, such as linearity or multivariate normality; 3. It is grounded in the Neyman–Rubin causal model \cite{holland1986statistics} and extends earlier work on causality using linear structural equations \cite{haavelmo1943statistical}. Despite these strengths, relatively little attention has been given to incorporating Markovian assumptions into SCM for time series data. However, the Markovian framework is widely used in practice across fields such as neuroscience and econometrics \cite{korda2016discrete,pham2017texture}. One of its key advantages is that it restricts causal influences to a finite time window preceding the current observation, which aligns well with the temporal structure of real-world systems.

To operationalize this framework, we introduce the Causal Inference in Time Series (CITS) algorithm for discovering causal structure from multivariate time series. CITS is based on non-parametric SCM for time series which is Markovian of an arbitrary but finite order \cite{peters2017elements,li2017nonparametric}. CITS estimates the unrolled directed acyclic graph (DAG) by testing conditional independencies among variables within a $2\tau$ temporal window, accommodating both lagged and concurrent effects. It supports statistical tests suited to the underlying data distribution-such as partial correlation for approximately Gaussian data, or the Hilbert-Schmidt Independence Criterion for nonlinear, non-Gaussian settings. Once the unrolled DAG is recovered, we construct a rolled causal summary graph that captures directed influences across time. Under standard assumptions-including stationarity, faithfulness, and consistency of the independence test-CITS is provably correct. Moreover, when the process satisfies a first-order Markov condition and lacks concurrent interactions - such as in high-resolution electrophysiology, CITS remains robust to latent confounding, making it particularly valuable in neural systems where some sources are unobserved.

We evaluate CITS on a suite of synthetic time series benchmarks, including linear Gaussian, nonlinear non-Gaussian, and continuous-time recurrent neural networks. In each setting, CITS consistently recovers the correct causal structure more accurately than classical methods such as Granger Causality, the PC algorithm, and its time-aware extension (TPC), particularly in the presence of nonlinear dependencies and both source and sink in causal effects. We demonstrate the method’s practical utility in neuroscience to obtain functional neural circuitry under different visual stimulus conditions by applying CITS to high resolution electrophysiology recordings from mouse visual brain. We also uncovered the neural signal interactions within small neuronal motifs in the inferred pathways. The resulting causal graphs and their region-to-region anatomical distribution reveal biologically interpretable, stimulus-specific interactions among cortical, thalamic, and hippocampal regions, highlighting the potential of CITS to uncover directed neural interactions from large-scale electrophysiological time series. 

\section{CITS: A Statistically Causal Inference Framework for High-Resolution Time Series Data}

We introduce a novel framework for causal inference in multivariate time series, grounded in structural causal models (SCMs). Our approach captures both lagged and same-timestep (concurrent) effects by modeling causal dynamics as a finite-order Markovian process, aligning with many real-world measurements such as neural recordings. The method comprises three key steps: modeling causal relationships via an SCM, identifying them through conditional independencies in the data, and inferring them using a flexible, non-parametric algorithm (CITS) that supports both Gaussian and non-Gaussian settings. In a key special case of first-order Markov processes without concurrent effects, CITS enables confounding-robust inference, making it particularly suited to neural applications.

\begin{figure}[t!]
    \centering
    \includegraphics[width = \textwidth]{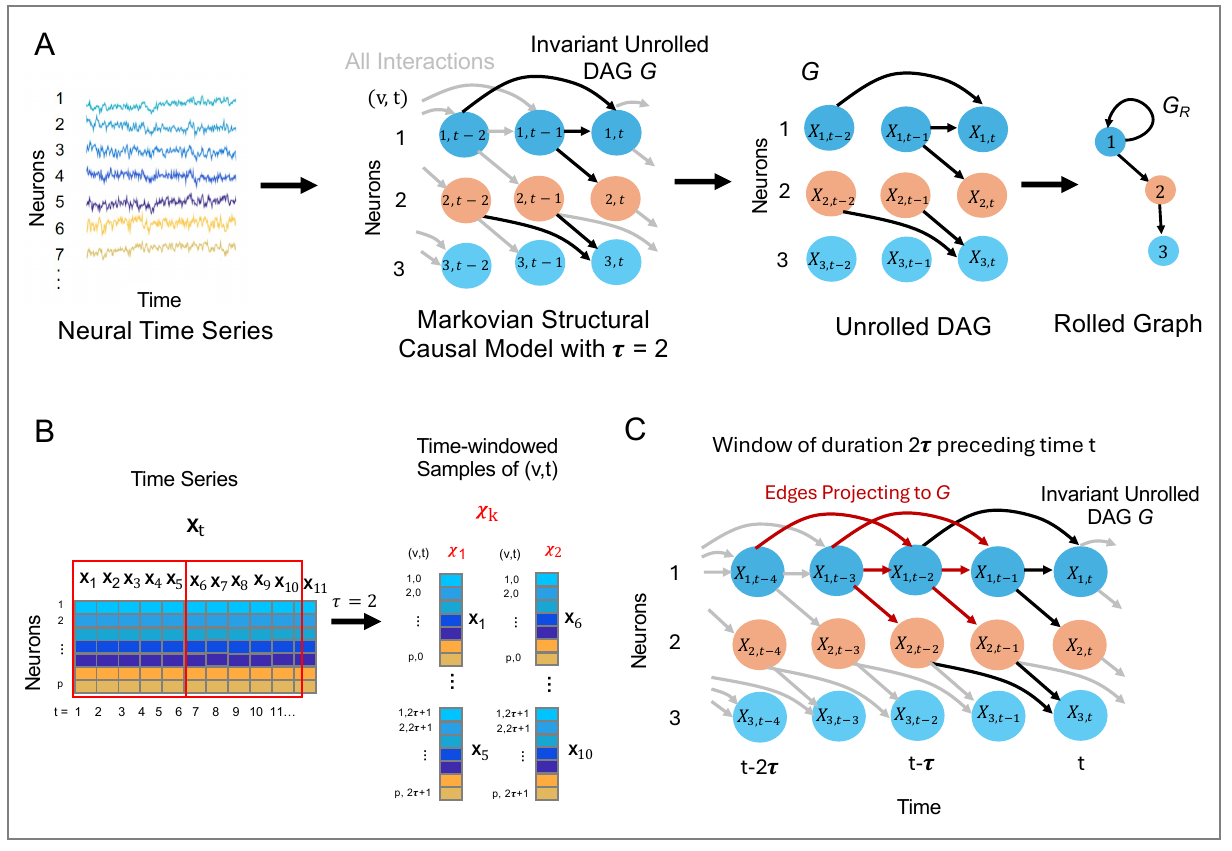}
    \caption{\textbf{Inference of the Unrolled Directed Acyclic Graph (DAG) for Neural Time Series and its Rolled Graph.} (A) Example Markovian Structural Causal Model interactions with $\tau = 2$. (B) Formation of time-windowed samples for each $(v,t)$, where $0 \leq t \leq 2\tau$. (C) The invariant unrolled DAG extends from time $t-\tau$, with edges projecting into it originating from at most time $t-2\tau$, motivating conditional dependence tests within a $2\tau$ window.}
    \label{fig:graphexample}
\end{figure}

\subsection{Markovian Structural Causal Model for Time Series}\label{sec:tsscm}

In the time series setting we consider, the data consists of a finite realization of a strictly stationary multivariate Markovian process $\{\mathbf{X}_t\}_{t\in \mathbb{Z}}$ of order $\tau$ with $p$ components, i.e., $\mathbf{X}_t = (X_{1,t},\ldots,X_{p,t})$ for every $t\in \mathbb{Z}$. The number of components $p$ is arbitrary but fixed. Serial dependence may exist both within and across components. We also assume that the stochastic process satisfies a structural causal model (SCM) that remains invariant across time. The SCM consists of a collection of assignments:
\begin{equation}\label{eqdef: process}
X_{v,t} = f_{v}(X_{\text{pa}(v,t)}, \epsilon_{v,t}), \quad v=1,\ldots,p, \ t\in \mathbb{Z},
\end{equation}
where $\text{pa}(v,t)\subseteq \{(d,s): d=1,\ldots,p;\ s=t,\ldots,t-\tau\}$, the $\epsilon_{v,t}$ are jointly independent across $v$ and $t$, and for $S\subseteq \{1,\ldots,p\}\times \mathbb{Z}$, we define $X_S := (X_{\bm{i}})_{\bm{i}\in S}$.

This formulation allows for concurrent (i.e., same-time) effects, where a variable $X_{u,t}$ may directly influence another variable $X_{v,t}$ at the same time step. In later sections, we focus on the important special case where $\text{pa}(v,t)$ excludes concurrent effects (i.e., $s < t$). This assumption, often valid in high-resolution time series, enables identifiability even under latent confounding when combined with a first-order Markov structure.

The graph $G$ associated with the SCM is constructed by creating one vertex for each $(v,t)$, where $v=1,\ldots,p$ and $t=1,\ldots,\tau+1$. A directed edge is drawn from each parent $(u,s) \in \text{pa}(v,\tau+1)$ to the node $(v,\tau+1)$, reflecting the functional dependence in \eqref{eqdef: process}. For notational simplicity, we may occasionally refer to the vertex $(v,t)$ as $X_{v,t}$. We assume that $G$ is acyclic, in which case it is a Directed Acyclic Graph (DAG). An illustrative example is shown in \textbf{Fig.~\ref{fig:graphexample}A}.

By the assumption of stationarity, the causal graph $G$ remains invariant across time: for any $t \geq \tau + 1$, the structure of directed edges from $X_{\text{pa}(v,t)}$ to $X_{v,t}$ (for all $v = 1, \ldots, p$) replicates the same DAG $G$ (see \textbf{Fig.~\ref{fig:graphexample}A}). Moreover, under the Markovian assumption of order $\tau$, the set $\text{pa}(v,t)$ includes variables indexed at time points $s = t, t - 1, \ldots, t - \tau$ and no earlier.

We further assume that the SCM satisfies the faithfulness assumption, which implies that all and only those conditional independencies present in the distribution of the process are entailed by the DAG $G$. Under this assumption, $G$ can be identified up to its Markov equivalence class from the joint distribution of the process. Notably, in the time series context, the temporal ordering disambiguates the direction of non-concurrent effects: all lagged causal relationships (from time $s < t$ to $t$) are uniquely identifiable as they respect the inherent time order. The structure among variables at the same time step is identifiable only up to a Markov equivalence class, akin to constraint-based methods like the PC algorithm, due to the lack of temporal ordering within a single time slice \cite{spirtes2000causation}.

The graph $G$, constructed over variables indexed by $(v, t)$ with edges from $X_{u, t_1}$ to $X_{v, t_2}$ representing causal influence across time, is often referred to as the \textit{unrolled DAG}. This representation explicitly captures the temporal evolution of causal dependencies over the sequence of observations. For interpretability and analysis, the unrolled DAG is often summarized into a \textit{rolled graph}, also known as a \textit{summary causal graph} \cite{peters2013causal}. The rolled graph $G_R$ is a directed graph over the $p$ components (variables), with an edge $u \to v$ if variable $X_{u, t}$ has a causal effect on $X_{v, t'}$ at some later (or same) time $t' \geq t$.

\begin{definition}\label{def:rolledgraph}
The Rolled Graph of $G$, denoted by $G_R$, is the directed graph over nodes $1,\ldots,p$ with an edge $u \to v$ if and only if there exists an edge $(u, t_1) \to (v, t_2)$ in the unrolled DAG $G$ for some $t_1 \leq t_2$.
\end{definition}

Due to the stationarity and Markovian properties of the process, it suffices to examine causal edges into a fixed time point $t$ from time points $t - \tau, \ldots, t$. Consequently, an edge $u \to v$ exists in the rolled graph $G_R$ if and only if there is a directed edge $X_{u,s} \to X_{v,t}$ in the unrolled DAG $G$ for some $s \in \{t-\tau,\ldots,t\}$. For example, one can fix $t = 2\tau + 1$ and check whether any $X_{u,s}$ with $s = \tau+1, \ldots, 2\tau$ has a directed edge to $X_{v,2\tau+1}$ in $G$, as illustrated in \textbf{Fig.~\ref{fig:graphexample}A}.

\subsection{Conditional Independence and Graph Recovery in Time Series}\label{sec:properties}
The goal of this paper is to estimate $G$ and thereafter $G_R$ in a non-parametric manner without relying on particular model specifications for the underlying time series. To motivate the goal, we consider a simple example of a stationary VAR(p)-model: $X_{v,t} = \sum_{u=1}^p \sum_{j=1}^{\tau} \phi_{uv}^{(j)} X_{u,t-j}+\epsilon_{v,t}$, where $\tau$ is the order of the Markovian process, and the noise variables $\epsilon_{v,t}$ are i.i.d. with mean zero and $\epsilon_{1,t},\ldots,\epsilon_{p,t}$ are independent of $\{X_{u,s}:u=1,\ldots,p, s<t\}$. Then, the entry of the adjacency matrix of $G$ corresponding to the edge $(u,t-j)\rightarrow (v,t)$ in $G$ is, \begin{equation}\label{eq:var_adj}\mathbf{1}(\phi_{uv}^{(j)}\neq 0).\end{equation}

In this scenario, the weights $\phi_{uv}^{(j)}$ can be estimated by a Likelihood Ratio (LR) test assuming Gaussian distributed noise terms, and plugging them into \eqref{eq:var_adj}, one can estimate the adjacency matrix of $G$ and therefore $G$. This method underpins classical Granger causality, where directed edges in $G$ are inferred via significance tests on the VAR coefficients. However, if the noise distribution is unknown, LR-based inference becomes unreliable. Furthermore, if the true underlying data generating stationary process is nonlinear, perhaps with non-additive innovation terms, it is non-trivial to extend this approach. 

Instead, we adopt a nonparametric strategy based on conditional independence (CI), inspired by constraint-based methods such as PC algorithm \cite{kalisch2007estimating}. This approach leverages temporal order, the Markov property, and stationarity to constrain the space of valid conditioning sets. The adjacency of edges in $G$ can be determined using a conditional independence oracle as follows. Under the faithfulness assumption, $X_{v,t}$ and $X_{u,s}$ (for $t-\tau \leq s\leq t$) are non-adjacent in $G$ if and only if there exists a set $S$ such that they are d-separated by $S$, which in turn is equivalent to their conditional independence given $S$ \cite{verma2022equivalence}. Moreover, $X_{v,t}$ and $X_{u,s}$ are d-separated by their respective parents if and only if they are non-adjacent \cite{verma2022equivalence}. Since the process is Markovian of order $\tau$, the parents of both nodes are restricted to the interval ${s - \tau, \ldots, t}$. Consequently, conditional independence of $X_{v,t}$ and $X_{u,s}$ given a subset of nodes in this interval implies their non-adjacency.

This illustrates that we can relate the adjacency of a pair of nodes $X_{v,t}$ and $X_{u,s}$ in $G$ to conditional independence information of the pair of nodes given a set of other nodes in the interval $\{t\wedge s-\tau, \ldots, t\vee s\}$ (see \textbf{Fig. \ref{fig:graphexample}-C}). This is formalized by the following lemma, proved \textbf{in Supplementary \ref{appsec:theo_guar}}.

\begin{lemma}\label{lemma:concept}
For $u,v = 1,\ldots,p$ and $t \in \mathbb{Z}$, $s \in \{t - \tau,\ldots,t\}$, the following are equivalent:
\begin{enumerate}
    \item[(1)] $X_{u,s} \notin \text{pa}(v,t)$.
    \item[(2)] $X_{v,t}$ and $X_{u,s}$ are non-adjacent in $G$.
    \item[(3)] $X_{v,t} \ind X_{u,s} \mid \bm{X}_S$ for some $S \subseteq \{(d,r): d = 1,\ldots,p;\ r = t - 2\tau,\ldots,t\}$.
\end{enumerate}
\end{lemma}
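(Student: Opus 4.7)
The plan is to establish the three-way equivalence via the chain $(1)\Leftrightarrow(2)$ and $(2)\Leftrightarrow(3)$. Statement (2) is a graph-theoretic condition on $G$, (1) is the definitional one about parent sets, and (3) is the only one CITS can actually test statistically. The first two equivalences will follow from the construction of $G$ and its acyclicity, while the link to (3) will rely on faithfulness together with the Markov-of-order-$\tau$ assumption, which controls the time window in which the relevant conditioning set must lie.

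For $(1)\Leftrightarrow(2)$, recall that $(u,s)\to(v,t)$ is an edge in $G$ exactly when $(u,s)\in\text{pa}(v,t)$. When $s<t$, an edge in the opposite direction $(v,t)\to(u,s)$ cannot exist because that would require $t\in\{s-\tau,\ldots,s\}$, contradicting $s<t$; hence $X_{u,s}\notin\text{pa}(v,t)$ is equivalent to non-adjacency in $G$. In the concurrent case $s=t$, acyclicity of $G$ together with the Markov-equivalence convention for same-time edges stated in the preceding subsection reduces the two possible directions to a single direct-edge check, giving the same equivalence.

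For $(2)\Rightarrow(3)$, I invoke the standard DAG fact that the parents of a node d-separate it from every non-descendant. Because edges in $G$ respect the temporal order, whenever $s<t$ the node $(u,s)$ is never a descendant of $(v,t)$, so taking $S=\text{pa}(v,t)$ gives d-separation; when $s=t$ one of $(u,s)$, $(v,t)$ must be a non-descendant of the other by acyclicity, and we choose $S$ as the parent set of the descendant side. The Markov-order hypothesis gives $\text{pa}(v,t)\subseteq\{(d,r):r\in\{t-\tau,\ldots,t\}\}$ and $\text{pa}(u,s)\subseteq\{(d,r):r\in\{s-\tau,\ldots,s\}\}$; since $s\geq t-\tau$ implies $s-\tau\geq t-2\tau$, both candidate conditioning sets lie in $\{(d,r):r\in\{t-2\tau,\ldots,t\}\}$. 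D-separation then yields the conditional independence in (3) through the global Markov property. For $(3)\Rightarrow(2)$, faithfulness converts the existence of a separating conditioning set directly into d-separation in $G$, and two adjacent vertices in a DAG cannot be d-separated by any set, so non-adjacency follows.

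The main obstacle, such as it is, is the window bookkeeping in $(2)\Rightarrow(3)$: one must exhibit a separator lying inside the $2\tau$-wide time slab that CITS will later search, and it is precisely the Markov-of-order-$\tau$ hypothesis together with the constraint $s\in\{t-\tau,\ldots,t\}$ that forces either $\text{pa}(v,t)$ or $\text{pa}(u,s)$ into this slab. A secondary subtlety is the same-time case of $(1)\Leftrightarrow(2)$, where one relies on the acyclicity of $G$ to preclude bidirectional concurrent edges and to pick a consistent orientation for the adjacency check.
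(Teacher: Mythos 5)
Your proof is correct and follows essentially the same route as the paper's: $(1)\Leftrightarrow(2)$ from the temporal ordering of edges, $(2)\Rightarrow(3)$ by exhibiting a parent-based d-separator whose time indices the order-$\tau$ Markov assumption confines to $\{t-2\tau,\ldots,t\}$, and $(3)\Rightarrow(2)$ by faithfulness. The only difference is that you separate using the parent set of the downstream node via the local Markov property, whereas the paper conditions on the union $\text{pa}(v,t)\cup\text{pa}(u,s)$ of both parent sets; both separators lie in the required window, so the arguments are interchangeable.
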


\noindent
\textit{Proof provided in Supplementary Section~\ref{appsec:theo_guar}.}

We recall that due to the $\tau$-order Markov property, all elements of $\text{pa}(v,t)$ must be of the form $X_{u,s}$ for $u=1,\ldots,p$ and $s \in \{t-\tau,\ldots,t\}$. Therefore, by Lemma~\ref{lemma:concept}, we have:
\begin{align}\label{eq:prop_adj}
X_{u,s} \not\ind X_{v,t} \mid \bm{X}_S 
\quad \text{for some } 
S \subseteq \{(d,r): d = 1,\ldots,p;\ r = t - 2\tau,\ldots,t\}.
\end{align}

We leverage this key property to formulate the Causal Inference in Time Series (CITS) algorithm for estimating $G$ and its rolled version $G_R$, and establish its theoretical guarantees in the subsequent sections. 


\subsection{Causal Inference in Time Series (CITS) Algorithm} 
\subsubsection{Oracle Version: With Conditional Independence Access}
The properties described in \textbf{Section~\ref{sec:properties}} (see Lemma~\ref{lemma:concept}) motivate a two-level strategy for identifying $\text{pa}(v,t)$ using a conditional independence oracle. First, for each $u = 1,\ldots,p$ and $s \in {t-\tau,\ldots,t}$, we test whether $X_{u,s}$ satisfies property~\eqref{eq:prop_adj}. Second, for each such pair $(u,s)$, we search over subsets $S$ of ${(d,r): d=1,\ldots,p;\ r= t-2\tau,\ldots,t}$ to determine if conditional independence holds for at least one such $S$.

If such a subset $S$ exists, the edge is deleted. Otherwise, the edge is retained. Applying this logic to time $t = 2\tau+1$, we identify all parents $\text{pa}(v,2\tau+1)$, from which $G$ is constructed. The rolled graph $G_R$ is then obtained by summarizing across $s = \tau+1,\ldots,2\tau$. 

The following result guarantees the correctness of the CITS-Oracle procedure under standard assumptions.

\begin{theorem}[CITS-Oracle Recovery Guarantee]\label{thm:CITS_oracle}
Let $\{\bm{X}_t\}_{t\in \mathbb{Z}}$ be a strictly stationary Markovian process of order $\tau$, and assume it can be uniquely represented by a time-invariant structural causal model (SCM) with DAG $G$ as in~\eqref{eqdef: process}, satisfying the faithfulness assumption. Then, the CITS-Oracle algorithm recovers:
\begin{itemize}
    \item all non-concurrent directed edges in $G$ exactly, and
    \item all concurrent edges in $G$ up to their Markov equivalence class.
\end{itemize}
\end{theorem}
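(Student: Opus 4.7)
The plan is to decompose the theorem into two parts — skeleton recovery and edge orientation — and in each case, use stationarity to restrict attention to a single anchor time slice. The essential identifiability is already bundled into Lemma~\ref{lemma:concept}; what remains is to verify that the CITS-Oracle procedure mirrors that lemma and then invokes standard constraint-based orientation to handle same-time edges.

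For the skeleton, I would fix the anchor time $t = 2\tau + 1$ and, for each $v$, argue that CITS-Oracle correctly identifies $\text{pa}(v, 2\tau+1)$. By the $\tau$-order Markov assumption, the candidate parents lie in the window $\{(u,s): u=1,\ldots,p,\ s\in\{\tau+1,\ldots,2\tau+1\}\}$. For each candidate $(u,s)$, Lemma~\ref{lemma:concept} states that $X_{u,s}\notin\text{pa}(v,2\tau+1)$ is equivalent to the existence of a conditioning set $S\subseteq\{(d,r): r\in\{1,\ldots,2\tau+1\}\}$ with $X_{u,s}\ind X_{v,2\tau+1}\mid \bm{X}_S$. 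The oracle version of CITS searches precisely this family of conditioning sets, so the test deletes the edge if and only if the two nodes are non-adjacent in $G$. By the time-invariance of $G$, the skeleton recovered at the anchor slice replicates the skeleton of the whole unrolled DAG.

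For orientation, non-concurrent edges are immediate: any edge between $(u,s)$ and $(v,t)$ with $s<t$ must be oriented $s\to t$ since the SCM respects temporal order, which already exhausts the content of the first bullet. For concurrent edges (pairs with identical time index), there is no intrinsic direction from time, so I would invoke the usual constraint-based orientation rules: first detect unshielded colliders $(a,t)\to (c,t)\leftarrow(b,t)$ by checking whether the separating set found during the skeleton phase excludes $(c,t)$, and then propagate via Meek's rules. Faithfulness together with the correctness of the CI oracle guarantees that these rules recover the concurrent subgraph up to its Markov equivalence class (its CPDAG), yielding the second bullet.

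The main obstacle I expect is a careful bookkeeping check that the window and conditioning family supplied by Lemma~\ref{lemma:concept} are sufficient not only for skeleton detection but also for the v-structure test, since orienting a concurrent collider requires a separating set of its non-adjacent neighbors to be contained in the same $2\tau$-window. Once that coverage is verified — which essentially reduces to noting that d-separation among concurrent nodes conditional on the full Markov window coincides with d-separation in the static concurrent subgraph — stationarity lifts the anchor-slice result to the entire unrolled DAG, and taking the union over $s\in\{\tau+1,\ldots,2\tau+1\}$ as in Definition~\ref{def:rolledgraph} yields the rolled graph $G_R$ with the advertised guarantees.
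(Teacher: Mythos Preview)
Your proposal is correct and follows essentially the same route as the paper: reduce to the anchor slice $t=2\tau+1$ via stationarity, invoke Lemma~\ref{lemma:concept} for skeleton recovery, orient lagged edges by temporal order, and appeal to faithfulness for the concurrent subgraph. You are in fact more explicit than the paper about the concurrent-edge orientation machinery (collider detection and Meek propagation), whereas the paper's proof simply asserts that faithfulness together with the conditional independence oracle identifies the concurrent edges up to their Markov equivalence class without spelling out the orientation rules.
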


\noindent
\textit{Proof provided in Supplementary Section~\ref{appsec:theo_guar}.}

\begin{remark}[Interpretation of Rolled Graph under Concurrent Effects]
The rolled graph $G_R$ is formed by collapsing the time-unrolled DAG $G$ into a directed graph over neuron indices. When $X_{u,t}$ and $X_{v,t}$ are adjacent in $G$ with $s = t$, the direction of interaction is identifiable up to its Markov Equivalence Class. If the Markov equivalence class has both $X_{u,t} \rightarrow X_{v,t}$ and $X_{v,t} \rightarrow X_{u,t}$, then both $u \rightarrow v$ and $v \rightarrow u$ will appear in $G_R$. This reflects the nonidentifiability of direction under Markov equivalence and is consistent with representations used in constraint-based discovery methods.
\end{remark}

\begin{algorithm}[t!]
\SetKwInOut{Input}{Input}
\SetKwInOut{Output}{Output}
\Input{$X_{v,t}, v=1,\ldots,p; t=1,\ldots,2\tau+1$, Conditional Independence Information.}
\Output{DAG $G$ and Rolled DAG $G_R$}

Start with an initial DAG $G_1$ between nodes $\{X_{v,t}: v=1,\ldots,p; t=1,\ldots,2\tau+1\}$ with edges $X_{u,s}\rightarrow X_{v,2\tau+1}$ for $s=\tau+1,\ldots,2\tau+1$, $u,v=1,\ldots,p$.

\Repeat{all $u,v=1,\ldots,p$, $s=\tau+1,\ldots,2\tau+1$ are tested.}{
\Repeat{edge $X_{u,s}\rightarrow X_{v,2\tau+1}$ is deleted or all $S\subseteq\{(d,r):d=1,\ldots,p; r=1,\ldots,2\tau+1\}\setminus$\\$\{(u,s),(v,2\tau+1)\}$ are selected.}{
            Choose $S\subseteq\{(d,r):d=1,\ldots,p; r=1,\ldots,2\tau+1\}$.
            
            \uIf{$X_{u,s}\ind X_{v,2\tau+1} ~\vert~ \bm{X}_{S}$}{
                
                Delete edge $X_{u,s}\rightarrow X_{v,t}$.
                
                Denote this new graph by $G_1$.
}
}
}

$\text{pa}(v,2\tau+1) = \{X_{u,s}: X_{u,s}\rightarrow X_{v,2\tau+1} \text{ in $G_1$}; s=\tau+1,\ldots,2\tau+1; u=1,\ldots,p\}$.

Obtain the DAG $G$ by edges directing from $\text{pa}(v,2\tau+1)\rightarrow X_{v,2\tau+1}$.

Obtain the Rolled Graph $G_R$ with nodes $v=1,\ldots,p$ and edge $u\rightarrow v$ if $X_{u,s}\in \text{pa}(v,2\tau+1)$ for some $u=1,\ldots,p; s=\tau+1,\ldots,2\tau+1$.

\caption{CITS-Oracle}\label{alg:CITS_oracle}
\end{algorithm}

\subsubsection{Practical Implementation: Sample-Based CITS}
For the sample version of the CITS algorithm (CITS-sample), we replace the conditional independence statements by outcomes of statistical tests for conditional dependence based on a sample. For details of appropriate conditional dependence tests, \textbf{see Supplementary \ref{appsec:condindtests}}. Note that our method assumes access to only a single realization of the stochastic process, as is common in practice. To address this, we construct time-windowed samples by taking consecutive time windows of a duration of $2\tau+1$ (see \textbf{Fig. \ref{fig:graphexample}B}). That is, the samples are $\chi_k = \{X_{v,t}:v=1,\ldots,p; t= (2\tau+1)(k-1)+1,\ldots,(2\tau+1)k\}, k=1,\ldots,N$ where $N=\floor{\frac{n}{2\tau+1}}$. For example, for some $v=1,\ldots,p$ and $t=1,\ldots,2\tau+1$, the samples for $X_{v,t}$ based on $\{\chi_k\}_{k=1}^N$ are: $\{X_{v,(2\tau+1)(k-1)+t}:k=1,\ldots,N\}$. Then for $u,v=1,\ldots,p; t=2\tau+1; s=\tau+1,\ldots,2\tau+1; S\subset\{(d,r):d=1,\ldots,p; r= 1,\ldots,2\tau+1\}$, we perform statistical tests of the form $X_{u,s}\ind X_{v,t} \vert \bm{X}_{S}$ based on samples $\chi_k$. We then estimate $\text{pa}(v,2\tau+1)$ using the same steps as CITS-oracle, but replacing the conditional independence statements by the outcome of the statistical tests. The CITS-sample algorithm is outlined in Algorithm \ref{alg:CITS_sample}.

\begin{algorithm}[t!]
\SetKwInOut{Input}{Input}
\SetKwInOut{Output}{Output}
\Input{$X_{v,t}, v=1,\ldots,p; t=1,\ldots,n$}
\Output{Estimated DAG $\hat{G}$ and Rolled Graph $\hat{G}_R$}

Construct Time-Windowed samples: $\chi_k = \{X_{v,t}:v=1,\ldots,p; t= (2\tau+1)(k-1)+1,\ldots,(2\tau+1)k\}, k=1,\ldots,N$.

Run the CITS-Oracle algorithm while replacing the conditional independence statement in Line 5 by statistical tests \textbf{in Supplementary \ref{appsec:condindtests}} based on samples $\chi_k$ to output DAG $\hat{G}$ and Rolled Graph $\hat{G}_R$.

\caption{CITS-Sample}\label{alg:CITS_sample}
\end{algorithm}

\subsection{Assigning Edge Weights in the Inferred Graphs}\label{sec:edgeweights}
In this section, we first assign an edge weight, denoted $w_{u,s}^{v,t}$, for the edge $X_{u,s}\rightarrow X_{v,t}$ in the Unrolled DAG estimate $\hat{G}$ obtained by CITS.
\begin{equation}\label{eq:causaleffectgaussian}
    w_{u,s}^{v,t}=\left\{\begin{array}{lr}
    0, & \text{ if } X_{u,s} \not\in pa_{\hat{G}}(X_{v,t}),\\
    &\\
    \text{coefficient of }X_{u,s} \text{ in }&\\
    X_{v,t} \sim pa_{\hat{G}}(X_{v,t}) & \text{ if } X_{u,s} \in pa_{\hat{G}}(X_{v,t})\end{array}\right.
\end{equation}
where $X_{v, t} \sim pa_{\hat{G}}(X_{v,t})$ is a shorthand for linear regression of $X_{v,t}$ on $pa_{\hat{G}}(X_{v,t}) = \{X_{a,b} : X_{a,b} \rightarrow X_{v,t} \text{ is an edge in } \hat{G}\}$. That is, the edge weights are obtained based on linear regressions corresponding to the estimated DAG of the SCM.

Next, we assign an edge \textit{weight} for the edge $u\rightarrow v$ in the Rolled graph estimate $\hat{G}_R$, following the way $\hat{G}_{R}$ is defined from $\hat{G}$ in the CITS algorithm: the edge weight for $u \to v$ in $\hat{G}R$, denoted $w_u^v$, is defined as the average of the weights $w{u,s}^{v,2\tau+1}$ over all $s \in {\tau+1,\ldots,2\tau}$ such that $X_{u,s} \in \text{pa}(X_{v,2\tau+1})$ in $\hat{G}$.

\subsection{Theoretical Guarantees}
We establish that the CITS algorithm is statistically consistent under mild assumptions on the underlying time series. Specifically, CITS recovers the true time-unrolled causal graph for non-concurrent edges and up to Markov equivalence for concurrent edges, and similarly recovers the rolled graph structure. This result holds under stationarity, faithfulness, and consistent conditional dependence testing. Such consistent conditional dependence tests in the time series setting include Fisher's partial correlation-based test for Gaussian distributed samples and Hilbert-Schmidt Independence Criterion, which is distribution-free. 

More precisely, let $\mu(X_{u,s},X_{v,t}\vert\bm{X}_S)$ denote a measure of conditional dependence of $X_{u,s}$ and $X_{v,t}$ given $\bm{X}_S$, i.e. it takes value $0$ if and only if we have conditional independence, and let $\hat{\mu}_n(X_{u,s},X_{v,t}\vert\bm{X}_S)$ be its consistent estimator. In Theorem \ref{thm:main}, we will use $\hat{\mu}_n(X_{u,s},X_{v,t}\vert\bm{X}_S)$ to construct a conditional dependence test guaranteeing consistency of the CITS estimate.


\begin{theorem}[CITS Consistency for Time-Unrolled and Rolled Graphs]\label{thm:main}
Let $\{\bm{X}_t\}_{t\in \mathbb{Z}}$ be a strictly stationary stochastic process of finite Markov order $\tau$ that follows a time-invariant structural causal model with DAG $G$ as in \eqref{eqdef: process}, and assume the distribution is faithful to $G$. Let $\hat{G}_n$ and $\hat{G}_{n,R}$ denote the estimated unrolled and rolled graphs, respectively, obtained by the sample CITS algorithm using a consistent conditional dependence test:
\[
\left\vert\hat{\mu}_n(X_{u,s},X_{v,t} \mid \bm{X}_S)\right\vert > \gamma
\]
for some fixed $\gamma > 0$, where $\hat{\mu}_n$ consistently estimates a valid conditional dependence measure.

Then, as $n \rightarrow \infty$:
\begin{itemize}
    \item $\hat{G}_n$ recovers the true time-unrolled DAG $G$ up to the Markov equivalence class over concurrent edges (i.e., for edges with $s = t$), and exactly for all non-concurrent edges (i.e., $s < t$).
    \item $\hat{G}_{n,R}$ recovers the rolled graph $G_R$ with all non-concurrent edges recovered exactly, and concurrent edges recovered up to their Markov equivalence class.
\end{itemize}
\end{theorem}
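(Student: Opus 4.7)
The plan is to leverage Theorem~\ref{thm:CITS_oracle} as the backbone and reduce the sample-based algorithm to the oracle version by showing that, with probability tending to one, every conditional independence decision made by the sample tests agrees with the corresponding oracle decision. Since CITS-Sample differs from CITS-Oracle only in replacing each oracle call by the thresholded test $|\hat{\mu}_n(X_{u,s},X_{v,t}\mid \bm{X}_S)|>\gamma$, uniform agreement of all such tests with the oracle yields $\hat{G}_n = G^{\text{oracle}}$ on that event. By Theorem~\ref{thm:CITS_oracle}, $G^{\text{oracle}}$ recovers $G$ up to Markov equivalence on concurrent edges and exactly on non-concurrent edges, which is precisely the conclusion claimed for $\hat{G}_n$.

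I would first enumerate the conditional independence queries issued by the algorithm: they involve nodes $(u,s)$ and $(v,2\tau+1)$ with conditioning sets $S \subseteq \{(d,r): d=1,\ldots,p;\ r=1,\ldots,2\tau+1\}\setminus\{(u,s),(v,2\tau+1)\}$, so their total number is bounded by $p^2(\tau+1)\cdot 2^{p(2\tau+1)}$, a constant depending only on $p$ and $\tau$. For each query, faithfulness guarantees that whenever $X_{u,s}$ and $X_{v,t}$ are conditionally dependent given $\bm{X}_S$, the population measure $\mu$ is bounded away from zero by some fixed $\mu^{\ast}>0$; choosing $\gamma < \mu^{\ast}$ and invoking consistency of $\hat{\mu}_n$ then implies that the thresholded test returns the correct answer with probability tending to one, both under the null $\mu=0$ and under the alternative $|\mu|\geq \mu^{\ast}$. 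A union bound over the finite query set gives $\mathbb{P}(\hat{G}_n = G^{\text{oracle}}) \to 1$.

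The main technical obstacle is establishing the consistency of $\hat{\mu}_n$ itself from the time-windowed samples $\chi_1,\ldots,\chi_N$ with $N = \lfloor n/(2\tau+1)\rfloor$. These non-overlapping blocks are identically distributed under strict stationarity but \emph{not} independent, so residual inter-block dependence must be absorbed through a mixing condition, e.g., $\beta$-mixing at a geometric rate, which the stationary Markovian process of order $\tau$ satisfies under mild ergodicity assumptions. For the two tests highlighted in the paper, Fisher's $z$-test on partial correlations in approximately Gaussian settings and the Hilbert--Schmidt Independence Criterion for general distributions, consistency of the sample statistic under such mixing is classical and would be imported as a black box. The threshold $\gamma>0$ plays the dual role of absorbing stochastic fluctuations of $\hat{\mu}_n$ around zero under conditional independence, and of exploiting the positive faithfulness gap $\inf\{|\mu|:\mu\neq 0\}$ under conditional dependence.

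Finally, once $\hat{G}_n = G^{\text{oracle}}$ on a high-probability event, the rolled-graph claim follows directly from Definition~\ref{def:rolledgraph}: $\hat{G}_{n,R}$ is a deterministic function of $\hat{G}_n$ obtained by collapsing the time index and taking the union of induced edges. Hence exact recovery of non-concurrent edges in $\hat{G}_n$ transfers to exact recovery in $\hat{G}_{n,R}$, while the Markov-equivalence ambiguity on concurrent edges carries over verbatim. The delicate step, as noted, is the mixing-based consistency argument for $\hat{\mu}_n$ from a single realization; the remaining steps are essentially a combinatorial reduction to the oracle theorem plus a finite union bound.
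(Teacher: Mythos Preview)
Your proposal is correct and follows essentially the same route as the paper: reduce CITS-Sample to CITS-Oracle by showing that every thresholded test agrees with the oracle decision with probability tending to one, handle Type~I errors via consistency of $\hat{\mu}_n$ and Type~II errors via the faithfulness gap (the paper takes $\gamma = c/2$ with $c = \inf\{|\mu|:\mu\neq 0\}$, matching your $\gamma<\mu^\ast$), then apply a finite union bound and invoke Theorem~\ref{thm:CITS_oracle}. Your additional discussion of mixing conditions for the time-windowed samples is exactly what the paper offloads to a separate lemma and its Assumptions~1/1$^*$, so your treatment is, if anything, slightly more self-contained.
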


\noindent
\textit{Proof provided in Supplementary Section~\ref{appsec:theo_guar}.}

\begin{corollary}\label{cor:latent}
Under the assumptions of Theorem~\ref{thm:main}, the CITS algorithm consistently recovers the correct lagged causal graph among observed variables, even when the underlying structural causal model includes latent variables, provided that:
\begin{itemize}
    \item[(i)] The process is one-step Markovian;
    \item[(ii)] All directed effects among observables occur strictly across time steps (i.e., no concurrent edges);
    \item[(iii)] The marginal distribution over observed variables is faithful to the induced causal graph.
\end{itemize}
\end{corollary}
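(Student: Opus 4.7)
The plan is to derive Corollary \ref{cor:latent} as a specialization of Theorem \ref{thm:main} once the latent-variable structure is accounted for. The reduction hinges on showing that the conditional-independence tests CITS performs on observables remain sound for detecting lagged adjacencies, even though CITS cannot condition on latent nodes.

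First, I would set up the full unrolled DAG $H$ over observables $\{X_{v,t}\}$ and latents $\{L_{w,t}\}$ generating the SCM in the sense of \eqref{eqdef: process}. By (i) every parent set in $H$ is confined to the one-step window $\{t-1,t\}$, and by (ii) no observable is a concurrent parent of another observable, so every same-time parent of $X_{v,t}$ must be latent. The target graph is the lagged subgraph of the latent projection of $H$ onto observables, and (iii) asserts that the observed marginal is faithful to this projected graph. Then, for any non-adjacent lagged pair $X_{u,t-1}, X_{v,t}$ in the projected graph, I would exhibit a d-separating set drawn entirely from observables within CITS's search window $\{t-2,t-1,t\}$ (available because $\tau=1$): the natural candidate is $S = \bm{X}_{t-1}\setminus\{X_{u,t-1}\}$, augmented if needed by observables at $t$ distinct from $X_{v,t}$. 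One-step Markov implies every forward directed path from the past into $X_{v,t}$ crosses the $t-1$ slice and is blocked by $S$; assumption (ii) guarantees that any same-time chain among observables at $t$ passes through an observable waypoint in $S$; and any remaining purely-latent chain must exit the $t-1$ slice at an observable before reaching $X_{v,t}$, again blocked by $S$.

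With these d-separations witnessed by sets CITS actually enumerates, and with consistent conditional-dependence testing under (iii) playing the role that full-graph faithfulness played in Theorem \ref{thm:main}, I would then transport the consistency argument of Theorem \ref{thm:main} to the lagged portion of the unrolled graph. The lagged rolled graph follows from the same averaging/summarization step used in that theorem, completing the reduction.

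The main obstacle is handling cross-time latent confounders of the form $X_{u,t-1}\leftarrow L_{w,t-1}\to X_{v,t}$, which in principle create an observable dependence unblockable by any observable conditioning set. A careful reading of (ii), interpreted at the level of the projected ADMG — namely, that no same-time observable edge appears \emph{in the projection}, so that any such $L_{w,t-1}$ must act only at a single time slice — is needed either to rule such confounders out or to show that they correspond to genuine lagged adjacencies in the target graph. Making this step rigorous, presumably by an inducing-path argument on the one-step projected ADMG, is where the substantive content of the corollary sits; the surrounding argument is essentially bookkeeping that replays Theorem \ref{thm:main} on the projected structure.
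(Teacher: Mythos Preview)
Your plan follows the paper's route exactly. The paper does not give a formal proof of the corollary, only a short discussion paragraph asserting that under (i)--(ii) ``latent variables can induce at most contemporaneous dependencies within a single time step,'' after which it appeals to Theorem~\ref{thm:main} via the marginal-faithfulness hypothesis (iii). Your reduction---exhibit observable d-separating sets inside CITS's $2\tau$-window for every lagged non-adjacency in the projected graph, then replay the consistency argument of Theorem~\ref{thm:main}---is precisely this strategy spelled out.

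You are in fact more careful than the paper. The cross-time confounder $X_{u,t-1}\leftarrow L_{w,t-1}\to X_{v,t}$ you flag is not ruled out by the hypotheses as literally stated: assumption (ii) constrains only observable-to-observable edges, and one-step Markov on the full system still permits a latent at $t-1$ to have children at both $t-1$ and $t$. The paper's discussion does not address this beyond the bare assertion quoted above, so the gap you isolate is genuine and the paper does not close it rigorously either. Your diagnosis---that one must either read (ii) at the level of the projected ADMG (so latents act only within a single slice) or argue via inducing paths that any such confounder already produces a lagged adjacency in the target graph---is exactly where the substantive content sits, and it goes beyond what the paper supplies.
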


\paragraph{Discussion.}
This corollary follows from the temporal structure of the model. When all causal effects among observables are strictly time-lagged (i.e., from $X_{u,t-1}$ to $X_{v,t}$), with no concurrent (within-time-slice) effects, and the system is one-step Markovian, latent variables can induce at most contemporaneous dependencies within a single time step. Since CITS only tests for lagged conditional dependencies and conditions on past variables, it avoids inferring spurious lagged edges due to latent confounding under these assumptions. As long as the marginal distribution over observed variables is faithful to the induced causal graph, the consistency guarantees of Theorem~\ref{thm:main} continue to apply. The assumption that the marginal distribution over observed variables is faithful to the marginal causal graph is standard in causal discovery. While faithfulness is not guaranteed under marginalization, since conditional independencies can arise through parameter cancellations, it holds outside a measure-zero subset of structural equation models~\cite{spirtes2000causation}. Hence, we adopt it here as a generic condition that enables consistent recovery of lagged causal structure in the presence of latent variables.

This result is consistent with prior findings in the causal time series literature, where one-step Markovianity and temporal separation of effects reduce the risk of confounding in lagged causal inference~\cite{lindgren2021identifiability}.


\section{CITS Outperforms Existing Methods on Simulated Datasets}

\begin{figure}[t!]
    \centering
    \includegraphics[width=\textwidth]{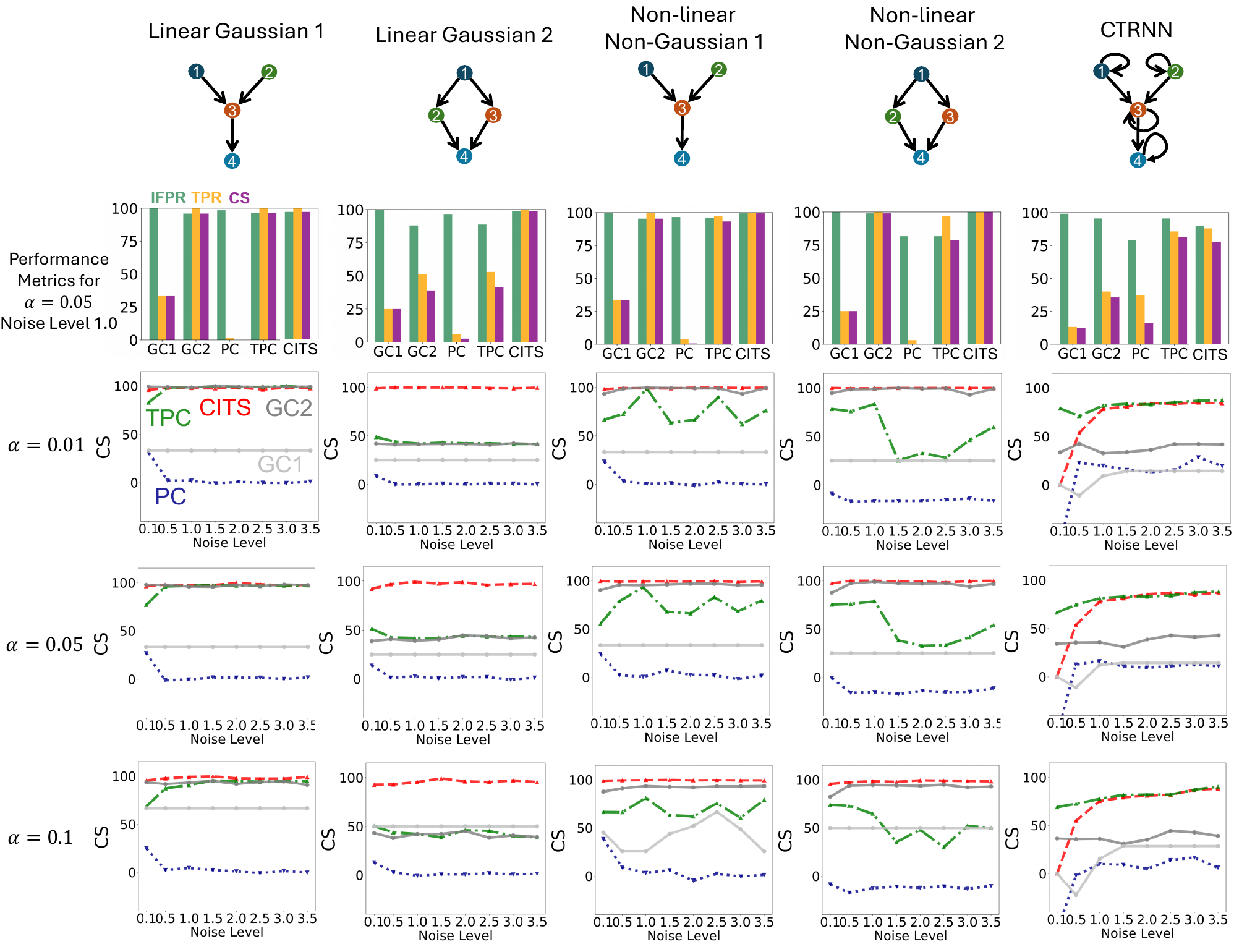}
    \caption{\textbf{Comparative study of inferring the Rolled graph.} Inference of Rolled graph for five simulation settings (left to right): Linear Gaussian Models 1 and 2, Non-linear Non Gaussian Models 1 and 2, Continuous Time Recurrent Neural Network (CTRNN). Row 1: The ground truth for each simulation paradigm is graphically represented. The performances of the five methods Granger Causality 1 (GC1), Granger Causality 2 (GC2), Peter-Clark (PC), Time-Aware PC (TPC) and CITS, are shown in terms of three metrics (right column): $1 -$ False Positive Rate (IFPR) (green), True Positive Rate (TPR) (orange) and Combined Score (CS) (purple). Row 2 shows the performance the three metrics for $alpha = 0.05$ and noise level $1.0$. The CS of the methods over varying noise levels in simulation $\eta = 0.1,0.5,1.0,\ldots,3.5$, with significance level $\alpha = 0.01, 0.05, 0.1$ are also demonstrated in rows 3-5 respectively.}
    \label{fig:compsim}
\end{figure}

We compare the performance of CITS, Pairwise Granger Causality (GC1), Multivariate Granger Causality (GC2), naive application of PC algorithm (PC), and Time-Aware PC algorithm (TPC), to recover the ground truth causal relations in simulated datasets (see \textbf{Supplementary \ref{appsec:cinf_tsreview}}). We use simulated datasets from a variety of time series models ranging from linear to non-linear models, with and without common causes, and consider both the Gaussian and non-Gaussian noise settings (See \textbf{Supplementary \ref{appsec:simulstudy}}). In the simulations, for each model, we obtain 25 simulations of the entire time series each for different noise levels $\eta \in \{0.1,0.5,1,1.5,2,2.5,3,3.5\}$. All the time series simulated have $n=1000$ samples. We also used the level $\alpha$ of the conditional dependence test with $\alpha$ ranging in $0.01, 0.05$ and $0.1$. The performance of the methods in recovering the ground truth causal relationships is summarized using the following three metrics: (1) Combined Score (CS), (2) True Positive Rate (TPR) and (3) 1 - False Positive Rate (IFPR) (see Methods at \textbf{Section~\ref{sec:methods_metrics}} for definitions). Conditional independence tests are based on partial correlations in Gaussian settings and the Hilbert-Schmidt criterion in non-Gaussian settings (see \textbf{Section \ref{sec:methods_metrics}}).

In each of our simulation settings, there are $4$ neurons and $16$ possible edges (including self-loops), leading to a total of $400$ possible edges across $25$ simulations. CITS and TPC show superior performance in recovering the true Rolled graph at noise level $\eta = 1$ and thresholding parameter $\alpha = 0.05$ (\textbf{Fig.~\ref{fig:compsim}}). Their advantage is also evident in terms of Combined Score (CS) across varying noise levels $\eta$ and significance levels $\alpha$ in each simulation setting (\textbf{Fig.~\ref{fig:compsim}}). CITS consistently shows the strongest performance across all simulation settings, with TPC closely following. In the \textit{Linear Gaussian 1} scenario, where the true graph has three converging edges, CITS and TPC both achieve a TPR of 100\% with high Combined Scores (97.2\% and 96.6\%, respectively), while GC1, despite having an IFPR of 100\%, detects only one of the three true edges (TPR = 33.3\%). In \textit{Linear Gaussian 2}, which includes both a common cause and a common effect, CITS again leads with a CS of 99\%, followed by TPC (41.7\%) and GC2 (39\%). Here, GC2’s performance is hindered by multicollinearity in regression, and PC shows poor sensitivity with a TPR of just 6\%. In the \textit{Non-linear Non-Gaussian 1} scenario, CITS achieves a CS of 99.4\%, outperforming all methods, while TPC follows closely (93.3\%). GC2 detects all true edges (TPR = 100\%) but introduces slightly more false positives. In the \textit{Non-linear Non-Gaussian 2} case, CITS again attains perfect scores across metrics (TPR = 100\%, IFPR = 100\%, CS = 100\%). Finally, in the \textit{CTRNN} setting with self-loops, CITS and TPC are the only methods that reliably detect recurrent structures, achieving TPRs of 88\% and 85.7\%, respectively. TPC achieves the highest CS (81.3\%), just ahead of CITS (77.8\%). Overall, CITS proves to be the most robust method across both linear and nonlinear settings, maintaining a high detection rate while minimizing false positives.

We then performed a systematic comparison of the Combined Score for CITS and other methods across noise levels $\eta$ ranging from $0.1$ to $3.5$ and significance levels $\alpha = 0.01,0.05,0.1$ (\textbf{Fig. \ref{fig:compsim}}). In the Linear Gaussian 1 scenario, CITS has a CS of nearly $100\%$ across noise levels greater than $1.0$ and all signifiance levels considered, closely matching the parametric GC2 model as well as non-parametric TPC, which are followed by GC1 in performance and lastly PC. In the Linear Gaussian 2 scenario, the distinction is remarkable, where CITS has a CS of $\approx 100\%$ across all levels of simulation noise and significance level $\alpha$, however the next best model for this setting are TPC and GC2 with a CS of $\approx 50\%$, followed by GC1 and lastly PC. In the Non-linear Non-Gaussian 1 and 2 scenarios, CITS has the highest CS compared to other methods across levels of noise and $\alpha$. In the CTRNN scenario, the best CS achieved is lower compared to other simulation paradigms. However, CITS and TPC have better performance compared to the other methods for noise level $\eta\geq 0.5$ and all $\alpha$.

\begin{figure}[t]
    \centering
    \includegraphics[width = 0.9\textwidth]{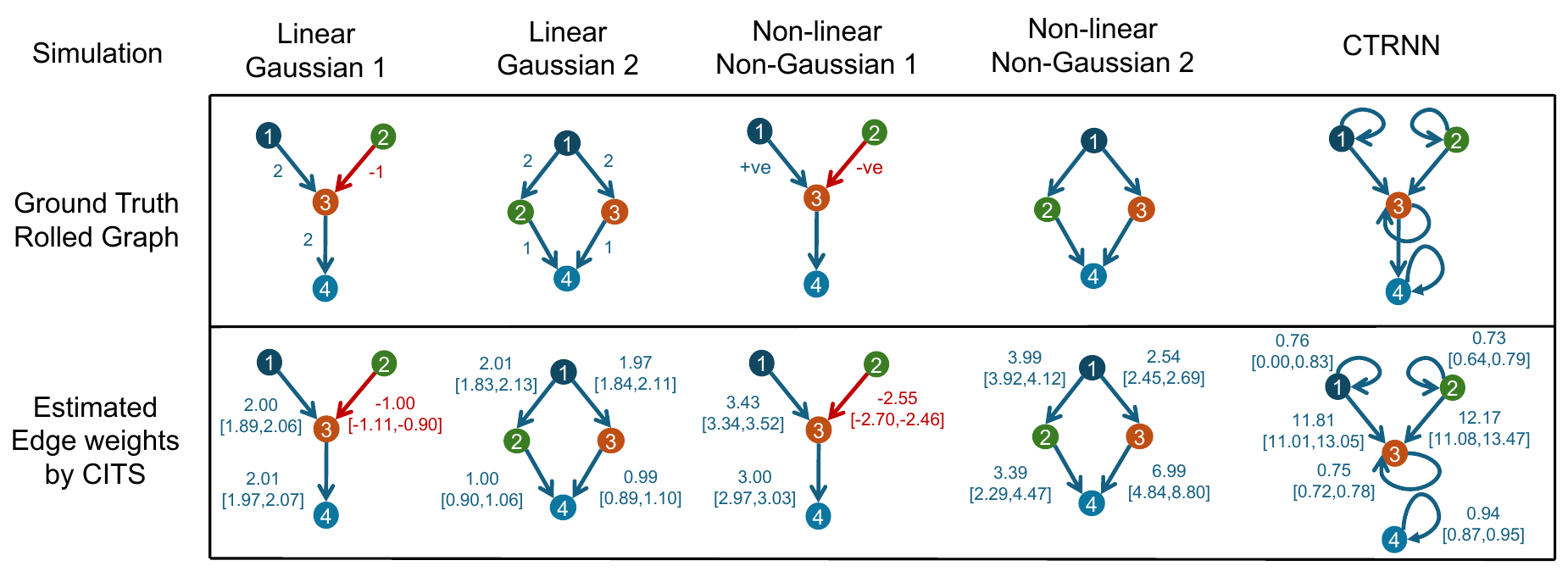}
    \caption{\textbf{Comparison of Ground Truth and Estimated Causal Edge Weights Across Simulation Paradigms} Top row: Ground truth edges for the simulation paradigms of Linear Gaussian 1, Linear Gaussian 2, Non-linear Non-Gaussian 1, Non-linear Non-Gaussian 2, and Continuous Time Recurrent Neural Network (CTRNN) (left to right). The ground truth edge weights are well-defined for linear paradigms. Bottom row: Estimated edge weights (median [min, max]).}
    \label{fig:edge_wts}
\end{figure}

The inferred edge weights and their directional nature (positive/increasing or negative/decreasing) at noise level $\eta = 1$ and threshold $\alpha = 0.05$ reveal that CITS most accurately recovers both the magnitude and sign of causal influences across scenarios (\textbf{Fig.~\ref{fig:edge_wts}}). Across all simulation settings, the edge weights inferred by CITS closely match the true values in both magnitude and sign. In the \textit{Linear Gaussian 1} scenario, the median estimates for $1 \rightarrow 3$, $2 \rightarrow 3$, and $3 \rightarrow 4$ are nearly identical to the ground truth values (2, –1, and 2), with narrow ranges across trials and consistent sign recovery. Similarly, in \textit{Linear Gaussian 2}, all four edges—including those involving common causes and effects—are accurately estimated, with medians near 2 and 1, and signs matching the true positive directions in every case.

In the \textit{Non-linear Non-Gaussian} scenarios, despite the presence of sinusoidal dependencies and wider value ranges, CITS maintains high fidelity in edge weight estimation. Median values for connections such as $1 \rightarrow 3$ and $3 \rightarrow 4$ remain tightly centered around 3 to 3.5, while signs consistently reflect the correct increasing or decreasing influences. In the \textit{CTRNN} setting, CITS detects self-loops with moderate strength (e.g., 0.73 to 0.94) and identifies strong feedforward edges such as $1 \rightarrow 3$ and $2 \rightarrow 3$ with high estimated weights (around 12). In all scenarios, the directional nature of each edge is recovered correctly across trials, highlighting the robustness of CITS to structural complexity and nonlinearity.

\section{CITS Reveals Statistically Causal Neural Circuitry in the Mouse Brain During Visual Tasks}\label{sec:application}

In neuroscience, the term Functional Connectivity (FC) refers to the network of interactions between individual units of the brain, such as neurons or brain regions, with respect to their activity over time \cite{reid2012functional,biswas2022statistical1}. The main purpose of identifying the FC is to gain an understanding of how neurons work together to create brain function. The FC can be represented as a graph, where nodes denote neurons and edges denote a stochastic relationship between the activities of connected neurons. These edges can be undirected, indicating a stochastic association, whence the FC is termed as Associative FC (AFC). Alternatively, the edges can be directed and represent a statistically causal relationship between the activities of neurons, whence the FC is termed as Causal FC (CFC). Finding the CFC is expected to facilitate the inference of the governing neural interaction pathways essential for brain functioning and behavior \cite{finn2015functional,biswas2022statistical1}, and serves as a promising biomarker for neuro-psychiatric diseases \cite{nakamura2017early}. The CFC is represented by a directed graph whose nodes are the neuron labels, and has an edge from neuron $u\rightarrow v$ if the activity of neuron $u$ at time $t$ has a statistically causal influence on the activity of neuron $v$ at a later time $t'$ \cite{reid2019advancing,biswas2022statistical1,biswas2022statistical2}. In terms of the framework described in \textbf{Section \ref{sec:tsscm}}, the CFC can be represented by the Rolled Graph of statistically causal neural interaction in the neural time series. Thereby, the CFC can be inferred by CITS algorithm (See Definition \ref{def:rolledgraph}).In this section, we show how CITS reveals putative CFCs between neurons in the mouse brain during visual tasks. To do so, we analyzed high‑resolution, large‑scale electrophysiological recordings from the Allen Institute’s Visual Coding Neuropixels dataset in the Allen Brain Observatory~\cite{allenbrainobs}. The dataset includes sorted spike trains recorded simultaneously across six cortical visual areas, hippocampus, thalamus, and other adjacent structures of the mouse brain, while being presented with different types of visual stimuli. For our CFC analysis we chose responses to natural scenes, static gratings andGabor patches. We omitted responses to full-field flash stimuli from the analysis because such stimuli can evoke widespread, near-synchronous responses across the brain, leading to strong, stimulus-locked common inputs that inflate statistical dependencies and may masquerade as direct causal interactions. Prior work highlights how such extrinsic correlations, driven by the stimulus ensemble, can overwhelm intrinsic connectivity patterns~\cite{okun2009instantaneous,nonnenmacher2018intrinsic}. The visual stimuli are repeatedly presented to the mice and the data is recorded using the Neuropixels technology, consisting of multiple electrodes inserted into the brain allowing real-time recording from hundreds of neurons across brain different brain regions (For more details, see in \textbf{Methods Section \ref{sec:datadescriptionfull}}.)

\subsection{Comparison of CITS and Other FC Methods.}

\begin{figure}[t!]
    \centering
    \includegraphics[width = \textwidth]{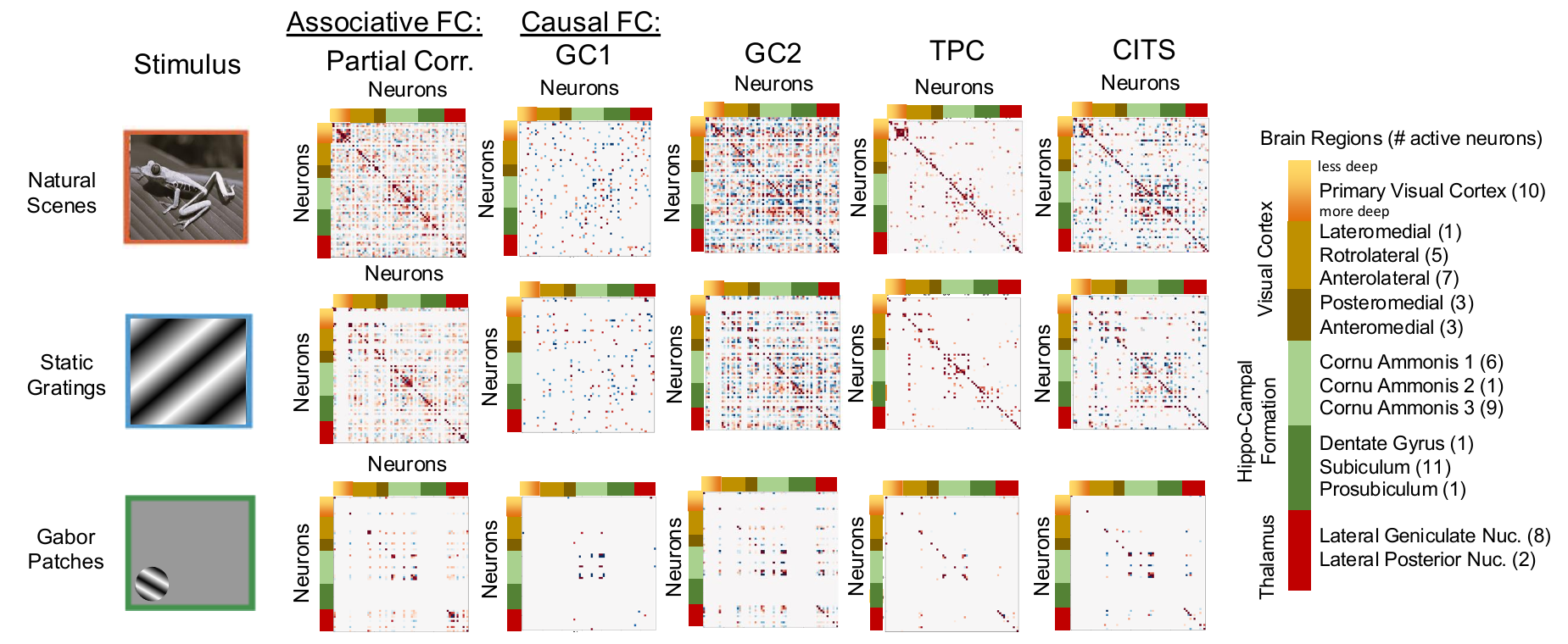}
    \caption{\textbf{Comparison of Associative and Causal Functional Connectivity Methods on Neuropixels Mouse Brain Data.} Four different methods for inferring functional connectivity (FC) were compared using benchmark mouse brain data from the Allen Institute's Neuropixels dataset. These methods include associative FC using Partial Correlation, and causal FC using Granger Causality 1 (GC1), Granger Causality 2 (GC2), Time-Aware PC (TPC), and CITS. 
    FC is estimated and visualized as an adjacency matrix with edge weights, which is symmetric for associative methods and asymmetric for causal methods. In each matrix, a non-zero entry at position $(i, j)$ indicates a directed connection from neuron $i$ to neuron $j$.}
    \label{fig:resneuropixels}
\end{figure}

We evaluated CITS alongside Time-Aware PC (TPC), Granger Causality (GC1 and GC2) and Partial Correlation which are well-known methods to obtain CFC and Associative Functional Connectivity (AFC) from electrophysiological recordings (\textbf{Fig. \ref{fig:resneuropixels}}). The Causal FC (CFC) is expected to be a directed subset of the AFC and consistent with its overall AFC pattern \cite{wang2016efficient}. Such is observed in the CFC obtained by TPC and CITS, which yield asymmetric adjacency matrices, yet match the overall pattern in the AFC in a sparse and dense manner, respectively. In contrast, the CFC obtained by GC1 is sparse but does not match the patterns in the AFC well, while GC2 yields an overly dense graph. This is likely due to Granger causality's sensitivity to noise and weak correlations in neural recordings, which can result in spurious causal links being inferred even when no true dependency exists \cite{stokes2017problems,seth2015granger}. 
In the CFC obtained by both TPC and CITS (\textbf{Fig. \ref{fig:resneuropixels}}), natural scenes evoke greater connectivity within active neurons in the Primary Visual Cortex, and static gratings evoke greater connectivity in the Posteromedial and Anteromedial Visual Cortex compared to other stimuli. All the three stimuli exhibit distinct connectivity patterns in the Cornu Ammonis regions of the Hippocampal Formation. In addition, natural scenes and static gratings induce more prominent connectivity within the Subiculum compared to other stimuli.

These findings are consistent with known functional specializations in the mouse brain. The increased connectivity within the Primary Visual Cortex under natural scenes aligns with prior studies showing that natural stimuli evoke richer and more spatially diverse activation patterns in early visual areas \cite{froudarakis2014population}. The enhanced connectivity in posteromedial and anteromedial visual areas under static gratings likely reflects their tuning to orientation and spatial frequency \cite{andermann2011functional}. Interestingly, all three stimuli elicit distinct patterns in the Cornu Ammonis regions of the hippocampus, and natural scenes and gratings induce stronger connectivity within the Subiculum. These effects may reflect stimulus-specific modulation of hippocampal circuits involved in contextual learning and memory integration \cite{igarashi2014coordination}. Together, these results suggest that CITS and TPC effectively capture biologically meaningful, stimulus-dependent causal dynamics across cortical and subcortical structures.

\begin{figure}[t!]
    \centering
    \includegraphics[width = 0.8\textwidth]{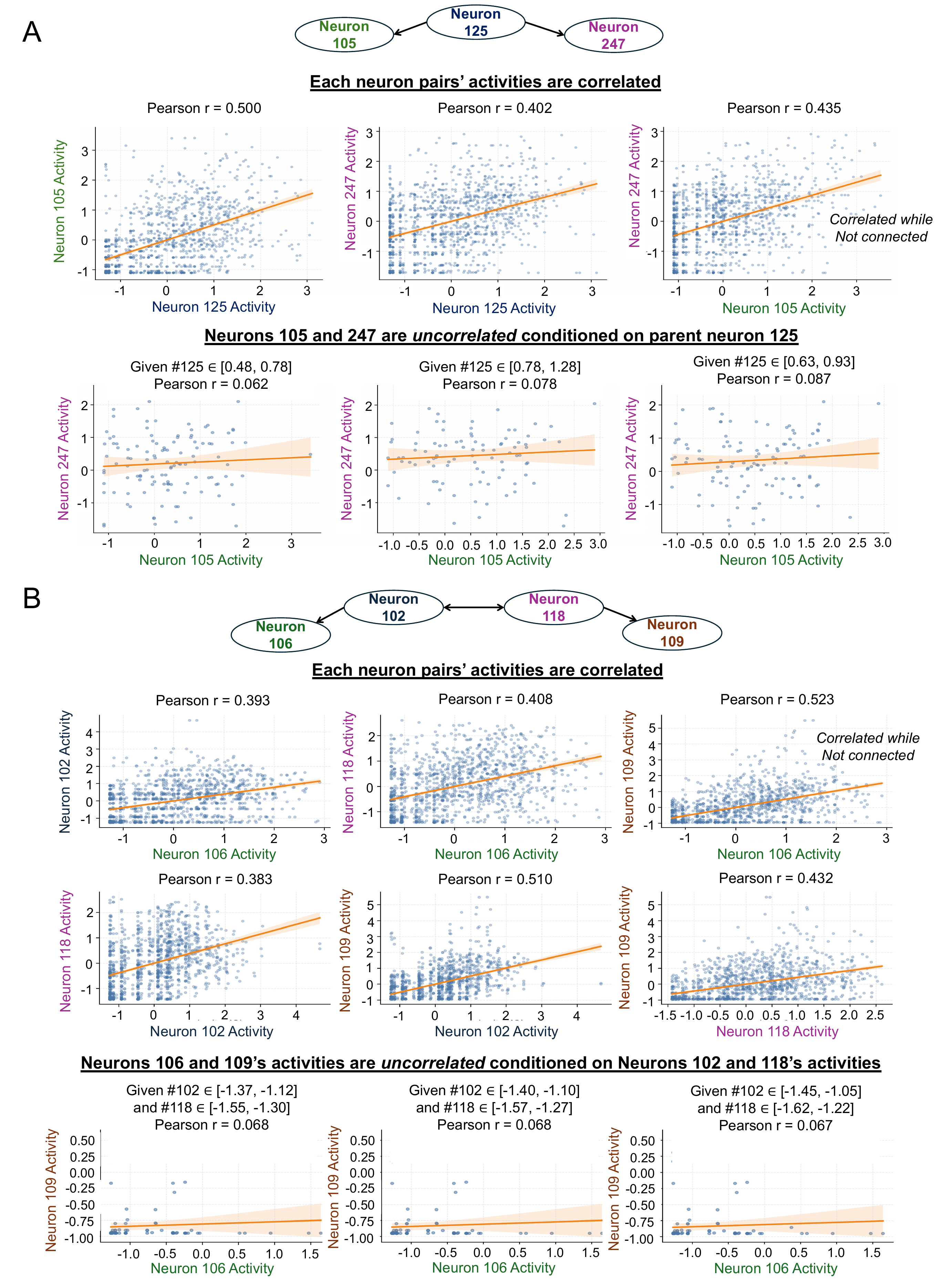}
    \caption{\textbf{Examples of Neural Signal Dependencies within Inferred CFC Motifs. A.} Common-Source neuronal motif identified by CITS. In the inferred CFC, for the motif of neuron 105 $\leftarrow$ 125 $\rightarrow$ 247, pairwise plots exhibit correlations between all pairs including spurious correlation between 105 and 247 (red) even though not directly connected in the motif (top row). As predicited, conditioned on the parent, 105 and 125's signals are uncorrelated. \textbf{B.} Multi-Parent neuronal motif identified by CITS. In a more complex motif of 106 $\leftarrow$ 105 $\leftrightarrow$ 118 $\rightarrow$ 109, pairwise plots exhibit correlations between all pairs (rows 1-2). As predicted, conditioned on both parents, neurons 106 and 109 become uncorrelated. This highlights CITS ability in correctly identifying these nodes as being not directly connected.}

    \label{fig:corrplots}
\end{figure}

\subsection{Neural Signal Dependencies within Inferred CFC Motifs}

To interpret and validate the inferred CFC, we examined whether the neural signals obey the conditional independence patterns predicted by the graph structure. A defining feature of CFC, and what distinguishes it from standard functional connectivity, is its ability to isolate direct interactions from indirect network effects. Theoretically, this relies on the directed Markov property: two nodes may appear highly correlated due to a shared ancestor, but this dependence is rendered spurious (statistically independent) once the activity of the parent nodes is controlled.

We tested this phenomena by analyzing specific motifs within the inferred network. As shown in \textbf{Fig.~\ref{fig:corrplots}}, marginally correlated neuron pairs often lack a direct connection in the CFC graph, implying their relationship is indirect. For example, in a "common source" motif (\textbf{Fig.~\ref{fig:corrplots}A}), Neurons 105 and 247 exhibit a strong pairwise correlation (Pearson $r \approx 0.435$) despite not being connected. Consistent with the inferred CFC structure, conditioning on their shared parent (Neuron 125) explains away this relationship, reducing the residual correlation to near zero ($r \approx 0.06-0.08$).

This validation extends to more complex topologies involving multiple common drivers. In the motif shown in \textbf{Fig.~\ref{fig:corrplots}B}, Neurons 106 and 109 share two distinct parents (Neurons 102 and 118) but have no direct link between them. While their raw activities are highly correlated (r=0.523), the CFC model correctly predicts that this is an indirect association. Indeed, when both parent neurons are conditioned on simultaneously, the correlation between the children vanishes ($r \approx 0.068$). These examples highlight the ability of CITS in identifying associations driven by upstream propagation and thereby providing a more rigorous map of neural information flow.


\subsection{Stimulus-Specific Causal Functional Connectivity.}

\begin{figure}[t!]
    \centering
    \includegraphics[width = \textwidth]{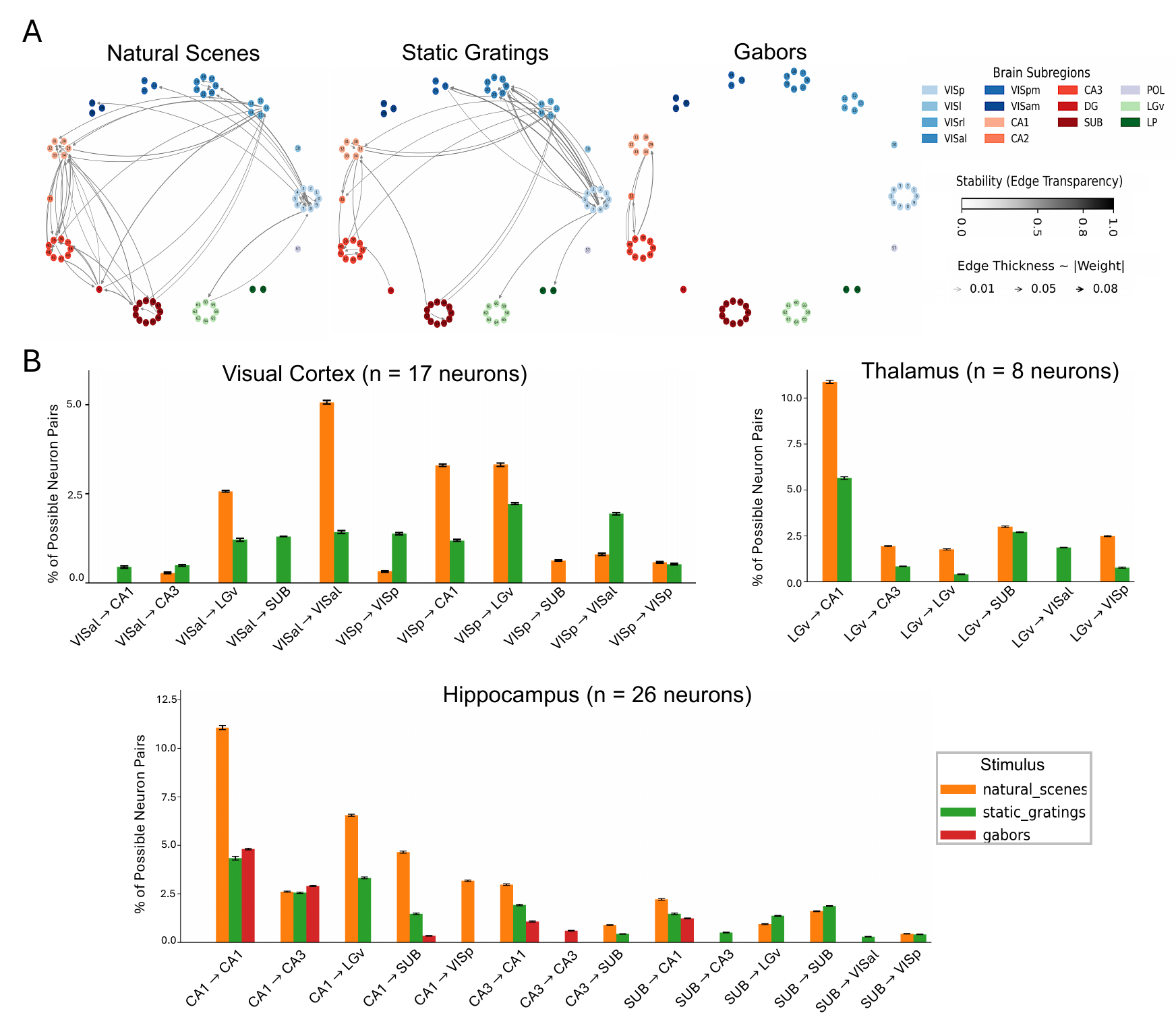}
    \caption{\textbf{Visual Stimulus-Specific Stable Causal Functional Connectivity and its Region-to-Region Distribution in the Mouse Brain.} A. Representative Causal Functional Connectivity (CFC) graphs for each stimulus type. Edges shown are those consistently present in over 80\% of trials within each stimulus category (natural scenes, static gratings, and Gabor patches). Edge thickness indicates connection weight, transparency indicates edge stability across trials, and nodes (brain regions) are colored according to anatomical groupings. 
    B. Anatomical distribution of statistically causal edges across stimulus conditions. Bar plots quantify the percentage of neuron pairs exhibiting directed connections between each source and target region, based on edges that passed the 80\% stability threshold. Regions are grouped into hippocampus (CA1, CA3, SUB), thalamus (LGv), and visual cortex (VISal, VISp). Bar heights indicate the percentage relative to all possible directed neuron pairs in each region group.}
    \label{fig:stimtypegraphs}
\end{figure}

To investigate how different visual stimuli modulate directed neural communication, we constructed representative Causal Functional Connectivity (CFC) graphs for each stimulus type: natural scenes, static gratings, and Gabor patches (\textbf{Fig.~\ref{fig:stimtypegraphs}A}). Augmented Dickey-Fuller (ADF) tests confirmed that 99.58\% of the 7068 neuron time series across all trials were stationary at a significance level of $p = 0.05$, supporting the suitability of applying CITS \cite{dickey1979distribution}. Each graph visualizes edges that were consistently inferred in over 80\% of trials. Edge thickness encodes average statistically causal strength, transparency reflects edge stability, and node color denotes anatomical grouping by brain region.  

\textit{Natural scenes} evoked the densest graph, comprising 79 edges spanning visual cortex (VISp, VISl, VISal), hippocampal subregions (CA1, CA2, SUB), and thalamic nuclei. The network included both intra-cortical and cortical-to-hippocampal links, consistent with the hypothesis that rich, complex stimuli engage distributed circuits for sensory integration and memory formation~\cite{saleem2018subcortical,siegle2021survey}. \textit{Static gratings} elicited 52 edges, mostly concentrated within visual areas, with relatively sparse cross-regional communication. This aligns with prior studies showing that structured, repetitive stimuli drive localized cortical processing~\cite{Niell2008}. \textit{Gabor patches} yielded only 5 edges, forming a minimal graph dominated by hippocampal interactions, reflecting constrained neural recruitment for highly simplistic visual inputs~\cite{andermann2011functional}. These findings demonstrate a systematic relationship between stimulus complexity and statistically causal network structure: stimuli with greater spatial richness induce more widespread and strongly coupled neural interactions, whereas simpler stimuli recruit localized, lower-density networks~\cite{siegle2021survey}.

\subsection{Anatomical Distribution of Causal Influence.}

To further examine the anatomical structure and stimulus specificity of the inferred interactions, we quantified the proportion of directed edges originating from and targeting each brain region across stimulus types (\textbf{Fig.~\ref{fig:stimtypegraphs}.B}). Natural scenes elicited the most widespread connectivity across all regions, including strong cortico-hippocampal and hippocampo-cortical interactions. VISp and VISal acted as dominant sources, sending projections to LGv, CA1, and SUB, while CA1 and SUB also showed recurrent intra-region connectivity. Static gratings produced a more localized pattern, with most edges confined to intra-visual cortex links and limited hippocampal involvement. Gabor stimuli yielded sparse connectivity overall, with weak interactions observed primarily within hippocampal subfields.

These patterns highlight how stimulus complexity shapes the anatomical footprint of causal communication in the brain. Natural scenes recruit integrative loops across sensory and memory circuits, whereas simpler stimuli yield more constrained, unimodal networks. The directed edge distribution reinforces the biological plausibility of the CITS-inferred architecture and supports the interpretation that causal pathways dynamically reconfigure based on stimulus demands~\cite{harris2019hierarchical}.

\section*{Discussion}

We introduced \textit{Causal Inference in Time Series} (CITS), a novel algorithm for inferring statistically causal relationships in stationary multivariate time series governed by structural causal models of arbitrary finite Markov order. By identifying the direct causes of a variable at time $t$ through conditional independence tests within a $2\tau$ time window, CITS overcomes key limitations of classical approaches such as Granger causality and PC-based methods. Unlike these earlier frameworks, CITS does not rely on parametric assumptions or fixed autoregressive structures and instead operates under a general, interpretable model of causal dependencies across time. We established both theoretical guarantees and empirical strengths of CITS, demonstrating its performance in diverse simulation settings and its application to large-scale neural data.

These findings suggest that CITS not only captures directed statistical dependencies but also reveals biologically meaningful patterns of interaction in neural circuits. Inferred causal graphs varied systematically with stimulus complexity and temporal structure, highlighting the dynamic reorganization of brain networks in response to sensory input (Fig.~\ref{fig:stimtypegraphs}). Rich stimuli such as natural scenes elicited large-scale, cross-regional connectivity linking visual, thalamic, and hippocampal regions. In contrast, simpler stimuli, such as Gabor patches, produced sparse, localized graphs. The emergence of mesoscale motifs, including strongly connected components across areas, further indicates that causal interactions are not randomly distributed but form stimulus-specific modules (Fig.~\ref{fig:stimtypegraphs}B). This aligns with theories of distributed processing, where coherent computations emerge from flexible and context-sensitive neural assemblies~\cite{Sporns2005,Bassett2017}. Our findings are consistent with prior work showing that stimulus complexity and structure shape cortical, thalamic, and hippocampal engagement~\cite{Saleem2018,Siegle2021}.

In contrast to Granger causality, which assumes linearity and can overfit noisy data, or PC-based algorithms, which are not designed for time series, CITS provides a unified, nonparametric approach tailored for temporally structured systems. The algorithm also extends beyond the recently proposed TPC framework~\cite{biswas2022statistical2} by avoiding assumptions of conditional DAGs within fixed windows and enabling inference under broader Markovian structures~\cite{biswas2022consistent}. Whereas previous methods often struggle with latent confounding or nonlinear dependencies, CITS remains robust and accurate under both conditions (see Theorem \ref{thm:main} and Corollary \ref{cor:latent}). 

Despite these strengths, CITS has several limitations. Firstly, the mathematical guarantees of CITS have been proved under stationarity of the underlying time series and fixed causal structure, that can be applicable to short recordings at high resolution. Many biological systems exhibit non-stationary dynamics, including learning, adaptation, or state changes. Applying CITS over sliding or trial-based windows yields an evolving CFC over time, but formal extensions to model time-varying causal graphs remain a compelling direction for future work. Second, the method relies on the accuracy and power of the conditional independence tests used. While we established that Pearson's partial correlation and the Hilbert-Schmidt Independence Criterion based conditional dependence test yields accurate inference asymptotically over time series samples, it remains to establish the sensitivity of these tests in the high dimension low sample size regime. Third, the Markov order $\tau$ must be specified in advance or estimated based on estimation of order of stationary time series. Future extensions of CITS could incorporate stimulus regressors or leverage decorrelated stimulus designs to separate extrinsic from intrinsic sources of coupling.

It is also important to emphasize that statistical causality does not necessarily imply direct anatomical synaptic connectivity. Inferred links may reflect shared upstream drivers, or coordinated dynamics across populations rather than monosynaptic projections. Integrating CITS with anatomical or interventional data, such as optogenetic stimulation, lesion studies, or tract-tracing, would provide additional validation and refinement. Furthermore, while we demonstrate CITS in the context of neural recordings, the method is broadly applicable to any multivariate time series, including domains such as economics, climate systems, and behavioral ecology.

CITS also offers a framework for studying dynamic brain-wide interactions across a variety of cognitive and clinical contexts. For example, applying CITS to recordings from models of chronic pain, depression, or neuropsychiatric disorders could help uncover disruptions in top-down control circuits or altered causal flow across prefrontal, limbic, and sensory systems. This approach may offer a principled way to dissect circuit-level mechanisms underlying affective or cognitive dysfunction, potentially revealing new intervention targets grounded in statistically causal network architecture. Together, these directions position CITS as a foundational tool for discovering interpretable and robust causal structure in complex temporal systems.

\section{Methods}\label{sec:methods}
\subsection{Description of the Visual Coding Neuropixels Dataset}\label{sec:datadescriptionfull}

In this section, we provide more details on the neuropixels dataset considered in Section IV.B of the main paper. 

\subsubsection{The Dataset} We restrict our analysis to a 116 days old male mouse (Session ID 791319847), having 555 neurons recorded simultaneously by six Neuropixel probes. The spike trains for this experiment were recorded at a frequency of 1 KHz. The spike trains of this mouse are then studied across three types of stimuli: natural scenes, static gratings and Gabor patches. 

The set of stimuli ranges from natural scenes that evoke a mouse's natural habitat, to artificial stimuli such as static gratings and Gabor patches. Static gratings consist of sinusoidal patches, and Gabor patches consist of sinusoidal patches with decreasing luminosity as the distance from the center increases. By using these three stimuli, we aim to investigate how they elicit different patterns of neuronal connectivity. Dynamic stimuli such as natural movies and drifting gratings are excluded from our analysis, since their results need a more nuanced study, which we plan to conduct in the future.

\subsubsection{Description of the Stimuli}
 In this section, we give a detailed description of each of the three stimuli considered in the neuropixels dataset.
\begin{enumerate}
    \item \textit{Natural scenes}, consisting of 118 natural scenes selected from three databases (Berkeley Segmentation Dataset, van Hateren Natural Image Dataset, and McGill Calibrated Colour Image Database) as one of the stimuli. Each scene is displayed for 250ms, after which it is replaced by the next scene in the set. The scenes are repeated 50 times in a random order with blank intervals in between.
    \item \textit{Static gratings} are full-field sinusoidal gratings with 6 orientations, 5 spatial frequencies, and 4 phases, resulting in 120 stimulus conditions. Each grating lasts for 250ms and then replaced with a different orientation, spatial frequency, and phase condition. Each grating is repeated 50 times in random order with blank intervals in between.
    \item \textit{Gabor patches}, for which the patch center is located at one of the points in a 9 × 9 visual field and three orientations are used. Each patch is displayed for 250ms, followed by a blank interval, and then replaced with a different patch. Each patch is repeated 50 times in random order with intermittent blank intervals.
\end{enumerate}

\subsubsection{Preprocessing} We transformed the recorded spike trains from 1 KHz to a bin size of 10 ms by grouping them based on the start and end times of each stimulus presentation. This allowed us to obtain the Peri-Stimulus Time Histograms (PSTHs) with a bin size of 10 ms. We then used a Gaussian smoothing kernel with a bandwidth of 16ms to smooth the PSTHs for each neuron and each stimulus presentation. The smoothed PSTHs were used as input for inferring the functional connectivity (FC) between neurons for each stimulus presentation. To select the active neurons for each stimulus, we first chose the set of neurons that were active in at least 25\% of the bins in the PSTH, and then collected the unique set of neurons over all stimuli. We found that there were 54, 43, 19, and 36 active neurons for natural scenes, static gratings, and Gabor patches, respectively, and a total of 68 unique active neurons overall. We separated the entire duration of stimulus presentation to obtain 58 trials of natural scenes, 60 trials of static gratings, 58 trials of Gabor patches, where each trial lasted for 7.5 s.

\subsection{Evaluation Metrics}\label{sec:methods_metrics}

Let True Positive (TP) represent the number of correctly identified edges, True Negative (TN) represent the number of correctly identified missing edges, False Positive (FP) represent the number of incorrectly identified edges, and False Negative (FN) represent the number of incorrectly identified missing edges across simulations. 

The \emph{Inverse False Positive Rate (IFPR)} is defined as:
\[
\text{IFPR} = \left(1 - \frac{\text{FP}}{\text{FP} + \text{TN}}\right) \cdot 100,
\]
which measures the proportion of correctly identified missing edges out of all true missing edges. Note that IFPR is scaled such that $100\%$ corresponds to perfect recovery of missing edges (i.e., no false positives).

The \emph{True Positive Rate (TPR)} is defined as:
\[
\text{TPR} = \left(\frac{\text{TP}}{\text{TP} + \text{FN}}\right) \cdot 100,
\]
which measures the proportion of correctly identified edges out of all true edges.

The **Combined Score (CS)**, also referred to as **Youden’s Index** \cite{vsimundic2009measures}, is given by:
\[
\text{CS} = \text{TPR} - \text{FPR},
\]
where $\text{FPR} = \frac{\text{FP}}{\text{FP} + \text{TN}}$.

\subsubsection{Conditional Dependence Tests}

In the Gaussian setting, we use partial correlation-based conditional dependence tests. These tests use a fixed significance level $\alpha$ and are based on the statistic:
\[
\sqrt{n - k - 3} \cdot \log\left(\frac{1 + \hat{\rho}}{1 - \hat{\rho}}\right) \leq \Phi^{-1}(1 - \alpha),
\]
where $\hat{\rho}$ is the estimated partial correlation, $n$ is the number of samples, and $k$ is the conditioning set size. This formulation is consistent with implementations in software such as the \texttt{pcalg} package in \texttt{R} and \texttt{TETRAD$^{IV}$}. The threshold can be equivalently expressed as:
\[
\gamma = \frac{\Phi^{-1}(1 - \alpha)}{\sqrt{n - k - 3}}.
\]

In the non-Gaussian setting, we use the Hilbert-Schmidt Independence Criterion for conditional dependence testing.

\section*{Code Availability}
The software package for the methods in this paper and example code are available at \url{https://github.com/abbasilab/cits}.

\section*{Data Availability}
The data used for the analyses in this paper are available at \url{https://portal.brain-map.org/circuits-behavior/visual-coding-neuropixels}.

\section*{Competing Interests}
The authors declare no competing interests.

\section*{Acknowledgments}
We thank members of the Abbasi Lab for their insightful feedback on this work. This study was partially supported by the National Institute of Mental Health of the National Institutes of Health under award number RF1MH128672 (original model development and initial validations), the National Library of Medicine of the National Institutes of Health under award number R01LM014619 (neuronal motif analysis), and the Sandler Program for Breakthrough Biomedical Research, which is partially funded by the Sandler Foundation. The content is solely the responsibility of the authors and does not necessarily represent the official views of the National Institutes of Health.

\bibliographystyle{unsrt}
\bibliography{main}

\newpage
\section*{Supplementary Material}
\addcontentsline{toc}{section}{Supplementary Material}
\setcounter{section}{0}
\renewcommand{\thesection}{S\arabic{section}}
\setcounter{figure}{0}
\renewcommand{\thefigure}{S\arabic{figure}}
\setcounter{table}{0}
\renewcommand{\thetable}{S\arabic{table}}

\section{Causal Inference in Time Series - A Review}\label{appsec:cinf_tsreview}
Among methodologies for causal inference in the time series scenario, one of the foremost is Granger Causality \cite{granger2001essays}. Granger Causality became popular as a parametric model-based approach that uses a vector autoregressive model for the time series data whose non-zero coefficients indicate causal effect between variables. In recent years, there have been non-linear extensions to Granger Causality \cite{tank2021neural}. Transfer Entropy is a non-parametric approach equivalent to Granger Causality for Gaussian processes \cite{barnett2009granger}.

A different framework for causal inference is the well-known Directed Graphical Modeling framework, which models causal relationships between variables by the Directed Markov Property with respect to a Directed Acyclic Graph (DAG) \cite{pearl2009causality}. It is a popular framework for independent and identically distributed (i.i.d.) datasets. Inference in this framework is either constraint-based or score-based. Constraint-based methods are based on conditional independence (CI) tests such as the SGS algorithm \cite{verma2022equivalence}, its faster incarnation of PC algorithm \cite{kalisch2007estimating}, both assuming no latent confounders, and the FCI algorithm in the presence of latent confounders \cite{spirtes1999algorithm}. On the other hand, score-based methods perform a search on the space of all DAGs to maximize a goodness-of-fit score, for example, the Greedy Equivalence Search (GES) and Greedy Interventional Equivalence Search (GIES) \cite{hauser2012characterization}. However, these approaches are based on i.i.d. observations, and their extension to time series settings is not trivial. In fact, naive application of PC algorithm to time series data is seen to suffer in performance due to not incorporating across-time causal relations \cite{biswas2022statistical1}.

Recently the PC algorithm has been extended to form the TPC algorithm that is applicable to time series datasets \cite{biswas2022statistical2}. Entner and Hoyer suggest to use the FCI algorithm for inferring the causal relationships from time series in the presence of unobserved variables \cite{entner2010causal}. The PCMCI algorithm \cite{runge2019detecting}, another variation of the PC algorithm,  consists of the following two steps. In the first step, it uses the PC algorithm to detect parents of a variable at a time $t$, and in the second step, it applies a momentary conditional independence (MCI) test. Chu et al. \cite{chu2008search} propose a causal inference algorithm based on conditional independence, designed for additive, nonlinear and stationary time series, using two kinds of conditional dependence tests based on additive regression model. Using the fact that one can incrementally construct the mutual information between a cause and an effect, based on the mutual information values between the effect and the previously found causes, Jangyodsuk et al. \cite{jangyodsuk2014causal} proposes to obtain a causal graph with each time series as a node and the edge weight for each edge is the difference in time steps between the occurrence of cause and the occurrence of the effect. Amortized Causal Inference is another conditional independence-based algorithm, that infers causal structure by performing amortized variational inference over an arbitrary data-generating distribution \cite{lorch2022amortized}.

\section{Theoretical Guarantees}\label{appsec:theo_guar}
In this section, we establish consistency of the CITS algorithm in recovering the true causal relationships from time series data under mild conditions on the underlying time series.

Let $\mu(X_{u,s},X_{v,t}\vert\bm{X}_S)$ denote a measure of conditional dependence of $X_{u,s}$ and $X_{v,t}$ given $\bm{X}_S$, i.e. it takes value $0$ if and only if we have conditional independence, and let $\hat{\mu}_n(X_{u,s},X_{v,t}\vert\bm{X}_S)$ be its consistent estimator. In Theorem \ref{thm:main}, we will use $\hat{\mu}_n(X_{u,s},X_{v,t}\vert\bm{X}_S)$ to construct a conditional dependence test guaranteeing consistency of the CITS estimate.

\setcounter{theorem}{2} 

\renewcommand{\thetheorem}{2}

\begin{theorem}[CITS Consistency for Time-Unrolled and Rolled Graphs (restated)]
Let $\{\bm{X}_t\}_{t\in \mathbb{Z}}$ be a strictly stationary stochastic process of finite Markov order $\tau$ that follows a time-invariant structural causal model with DAG $G$ as in \eqref{eqdef: process}, and assume the distribution is faithful to $G$. Let $\hat{G}_n$ and $\hat{G}_{n,R}$ denote the estimated unrolled and rolled graphs, respectively, obtained by the sample CITS algorithm using a consistent conditional dependence test:
\[
\left\vert\hat{\mu}_n(X_{u,s},X_{v,t} \mid \bm{X}_S)\right\vert > \gamma
\]
for some fixed $\gamma > 0$, where $\hat{\mu}_n$ consistently estimates a valid conditional dependence measure.

Then, as $n \rightarrow \infty$:
\begin{itemize}
    \item $\hat{G}_n$ recovers the true time-unrolled DAG $G$ up to the Markov equivalence class over concurrent edges (i.e., for edges with $s = t$), and exactly for all non-concurrent edges (i.e., $s < t$).
    \item $\hat{G}_{n,R}$ recovers the rolled graph $G_R$ with all non-concurrent edges recovered exactly, and concurrent edges recovered up to their Markov equivalence class.
\end{itemize}
\end{theorem}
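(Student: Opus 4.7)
The plan is to reduce the sample-level claim to the oracle guarantee of Theorem~\ref{thm:CITS_oracle}. CITS-Sample is algorithmically identical to CITS-Oracle except that each oracle query ``$X_{u,s} \ind X_{v,t} \mid \bm{X}_S$?'' is replaced by the finite-sample decision ``$|\hat{\mu}_n(X_{u,s},X_{v,t}\mid \bm{X}_S)| \leq \gamma$?''. Hence, if every such test agrees with the oracle with probability tending to one, then Theorem~\ref{thm:CITS_oracle} immediately yields that $\hat{G}_n$ recovers $G$ (exactly for $s<t$, up to Markov equivalence for $s=t$); and since $\hat{G}_{n,R}$ is a deterministic function of $\hat{G}_n$ via Definition~\ref{def:rolledgraph}, the same guarantees transfer to the rolled graph. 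The number of tests performed by the algorithm is at most on the order of $p^2(\tau+1)\cdot 2^{p(2\tau+1)}$, i.e., a constant in $n$, so a union bound reduces the problem to proving per-test consistency.

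For the per-test argument I would split into two cases. If $X_{u,s} \ind X_{v,t} \mid \bm{X}_S$ holds, then by definition of a valid dependence measure $\mu(X_{u,s},X_{v,t}\mid \bm{X}_S)=0$, so consistency of $\hat{\mu}_n$ yields $\p(|\hat{\mu}_n|>\gamma)\to 0$. If instead $X_{u,s} \not\ind X_{v,t} \mid \bm{X}_S$, faithfulness forces $|\mu(X_{u,s},X_{v,t}\mid \bm{X}_S)|>0$; since only finitely many (pair, conditioning-set) combinations are ever tested, the minimum $\mu_{\min}$ over all dependent configurations is strictly positive, and for any fixed $\gamma\in(0,\mu_{\min})$ consistency of $\hat{\mu}_n$ gives $\p(|\hat{\mu}_n|\leq\gamma)\to 0$. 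A union bound over the finite collection of tests produces the simultaneous event required for the reduction.

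The remaining ingredient is consistency of $\hat{\mu}_n$ itself, computed from the time-windowed samples $\chi_k=\{X_{v,t}: v=1,\ldots,p;\ t=(2\tau+1)(k-1)+1,\ldots,(2\tau+1)k\}$, $k=1,\ldots,N=\lfloor n/(2\tau+1)\rfloor$. These blocks are not iid but form a strictly stationary sequence derived from a finite-order Markov process, and under a mild mixing condition on $\{\bm{X}_t\}$ (e.g., $\beta$-mixing with summable coefficients, which follows from geometric ergodicity of the Markov-$\tau$ chain under standard Doeblin/irreducibility assumptions), existing results yield consistency of sample partial-correlation estimators in the Gaussian case and of HSIC-type estimators more generally, i.e., exactly the two tests described in Supplementary~\ref{appsec:condindtests}.

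The hard part, and the step I would flag, is justifying this last consistency under time-series dependence rather than simply assuming it. Rather than re-deriving such results, I would invoke the known mixing-based consistency statements for partial correlation and HSIC and state the mild mixing assumption explicitly, matching the abstract's phrase ``mild mixing conditions''. Two minor subtleties also need to be noted: $\gamma$ must be taken below $\mu_{\min}$, which exists and is positive by faithfulness plus finiteness of the test collection though it is not known a priori; and the direction of non-concurrent edges in the recovered graph follows automatically from the temporal ordering rather than from the CI tests, so the equivalence-class versus exact-recovery split of Theorem~\ref{thm:CITS_oracle} carries through verbatim to both $\hat{G}_n$ and $\hat{G}_{n,R}$.
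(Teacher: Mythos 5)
Your proposal is correct and follows essentially the same route as the paper: a union bound over the finite (constant-in-$n$) collection of tests, a two-case per-test analysis that is exactly the paper's Type I/Type II error decomposition with $\gamma$ chosen below the positive infimum $c$ of $|\mu|$ over dependent configurations (the paper takes $\gamma=c/2$), reduction to the oracle guarantee of Theorem~\ref{thm:CITS_oracle}, and estimator consistency delegated to mixing-based results (the paper's Assumptions 1/1$^*$ and its Lemma on consistency of $Z_n$ and $H_n$, cited from prior work). The only nit is that $|\mu|>0$ under conditional dependence follows from the definition of a valid dependence measure rather than from faithfulness, but this does not affect the argument.
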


In this section, we are going to consider two consistent estimators of the conditional dependence measures, namely the (log-transformed version of the) partial correlation $Z_n$ in the Gaussian setting, and the Hilbert-Schmidt estimator $H_n$ (see Sections 2.3 and 2.4  in \cite{biswas2022consistent}) in the non-Gaussian setting. For a detailed account of these conditional dependence estimators, see Supplementary Section \ref{appsec:condindtests}.

In order to establish consistency of the estimators $Z_n$ and $H_n$, under the non-i.i.d. setting, one needs to assume some mixing conditions on the underlying time series. Two such standard mixing conditions are $\rho$-mixing and $\alpha$-mixing \cite{kolmogorov1960strong,bradley2005basic} (see Supplementary Section \ref{appsec:defmixing} for the definitions of these two notions of mixing).  We make either of the following two assumptions related to mixing of the time series to ensure consistency of the estimators.

\subsection*{Assumption 1}\label{rhoassumptions}
\begin{enumerate}
    \item $\{X_{u,t},X_{v,t}: t=1,2,\ldots\}$ is $\rho$-mixing for all $u,v$, with maximal correlation coefficients $\xi_{uv}(k), k\ge 1$.
     \item $\e X_{u,t}^4 <\infty$ for all $u,t$ and $\sum_{k=1}^{\infty} \xi_{uv} (k) < \infty$ for $u,v\in 1,\dots,p$.
    \item There exists a sequence of positive integers $s_n\rightarrow \infty$ and $s_n = o(n^{1/2})$ such that $n^{1/2}\xi_{uv}(s_n)\rightarrow 0$ as $n\rightarrow\infty$ for $u,v\in 1,\ldots,p$.
\end{enumerate}

\subsection*{Assumption 1*}\label{alphaassumptions}
\begin{enumerate}
    \item[$1^*$.] $\{X_{u,t},X_{v,t}: t=1,2,\ldots\}$ is strongly mixing with coefficients $\{\alpha_{uv}(k)\}_{k\geq 1}$ for all $u,v=1,\ldots,p$.
    \item[$2^*$.] $\e |X_{v,t}|^{2\delta} < \infty$ for some $\delta > 2$ and all $v,t$, and the strongly mixing coefficients satisfy: $\sum_{k=1}^{\infty} \alpha_{uv} (k)^{1-2/\delta} < \infty$ for $u,v= 1,\dots,p$.
    \item[$3^*$.] There exists a sequence of positive integers $s_n\rightarrow \infty$ and $s_n = o(n^{1/2})$ such that $n^{1/2}\alpha_{uv}(s_n)\rightarrow 0$ as $n\rightarrow\infty$ for $u,v= 1,\ldots,p$.
\end{enumerate}
Assumptions 1.1-1.3 and $1^*.1^*-1^*.3^*$ are adapted from Conditions 1-2 in \cite{masry2011estimation}. See Remark 4.1 in \cite{biswas2022consistent} for a comparative discussion between these two sets of assumptions.

\begin{lemma}\label{lemma:consistency}
Under either Assumption \hyperref[rhoassumptions]{1} or Assumption \hyperref[alphaassumptions]{$1^*$}, 
\begin{enumerate}
    \item $Z_n(X_{u,s},X_{v,2\tau+1}\vert X_{S})$ converges to $z(X_{u,s},X_{v,2\tau+1}\vert X_{S})$ in probability.
    
    \item If the regularization constant $\epsilon_n$ satisfies $n^{-1/3} \ll \epsilon_n \ll 1$, then, $H_n(X_{u,s},X_{v,2\tau+1}\vert X_{S})$ converges to $H(X_{u,s},X_{v,2\tau+1}\vert X_{S})$ in probability,
\end{enumerate}
where $u,v=1,\ldots,p$, $s=\tau+1,\ldots,2\tau$ and $S\subset\{(d,r):d=1,\ldots,p;r=1,\ldots,2\tau+1\}\setminus\{(u,s),(v,2\tau+1)\}$.
\end{lemma}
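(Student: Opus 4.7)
The plan is to handle the two estimators separately, in each case reducing the problem to a weak law of large numbers for empirical moments (or kernel U-statistics) of a stationary mixing sequence, and then appealing to continuity arguments. Throughout, the time-windowed samples $\chi_k$ from Algorithm~\ref{alg:CITS_sample} inherit stationarity and the mixing coefficients of the underlying $\{\bm X_t\}$ (possibly at a thinned rate of $2\tau+1$), so all the mixing hypotheses in Assumptions 1 and $1^*$ transfer to the samples used to form $Z_n$ and $H_n$.

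For part (1), recall that $Z_n=\tfrac12\log\frac{1+\hat{\rho}_n}{1-\hat{\rho}_n}$, where $\hat{\rho}_n$ is the sample partial correlation between $X_{u,s}$ and $X_{v,2\tau+1}$ given $\bm X_S$, a smooth and explicit function of the sample covariance matrix $\hat{\Sigma}_n$ of $(X_{u,s},X_{v,2\tau+1},\bm X_S)$. Under Assumption 1, stationarity plus $\rho$-mixing with summable maximal correlations and a finite fourth moment (resp.\ under $1^*$, strong mixing with $\sum_k \alpha_{uv}(k)^{1-2/\delta}<\infty$ and a $2\delta$-moment) imply the classical weak law of large numbers for empirical second moments of stationary weakly dependent sequences; applied entrywise, $\hat{\Sigma}_n\to \Sigma$ in probability. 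Faithfulness of the underlying SCM guarantees that the conditioning submatrix of $\Sigma$ is nonsingular, so the partial correlation is a continuous function of $\Sigma$ at the population value. The continuous mapping theorem yields $\hat{\rho}_n\to \rho$ in probability, and Fisher's $z$-transform, being continuous on $(-1,1)$, delivers $Z_n\to z$ in probability.

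For part (2), $H_n$ is built from centered empirical kernel Gram matrices $\hat{K},\hat{L},\hat{M}$ (for $X_{u,s}$, $X_{v,2\tau+1}$, and $\bm X_S$) and a Tikhonov-regularized conditional cross-covariance operator of the form $(\hat{M}+n\epsilon_n I)^{-1}$. The proof has three steps. First, establish a mixing-based concentration for kernel Gram matrices: under either Assumption 1 or $1^*$, using that the kernels are bounded (or have sufficient moments), one shows $\|\hat{K}-K\|_{\mathrm{op}}, \|\hat{L}-L\|_{\mathrm{op}}, \|\hat{M}-M\|_{\mathrm{op}} = O_P(n^{-1/2})$, which is exactly the content of Sections 2.3--2.4 in Biswas et al.\ (2022). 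Second, decompose $H_n - H$ into a statistical fluctuation term and a Tikhonov bias term; the bias term is $O(\epsilon_n)$ and vanishes because $\epsilon_n\ll 1$. Third, propagate the $n^{-1/2}$ fluctuations of the Gram matrices through the regularized inverse, which amplifies them by a factor of order $\epsilon_n^{-3/2}$, giving a statistical error of order $O_P(n^{-1/2}\epsilon_n^{-3/2})$; the lower bound $\epsilon_n\gg n^{-1/3}$ is exactly what makes this $o_P(1)$. Combining the two, $H_n - H \to 0$ in probability.

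The main obstacle is part (2), specifically the simultaneous control of two errors depending on $\epsilon_n$ in opposite directions: the Tikhonov bias shrinks as $\epsilon_n\to 0$ while the stochastic error inflates through the regularized matrix inversion. The window $n^{-1/3}\ll \epsilon_n\ll 1$ is precisely the sweet spot at which both contributions vanish, and identifying this scaling is where the mixing hypotheses really bite — they enter only through the $O_P(n^{-1/2})$ operator-norm concentration of the empirical Gram matrices, after which the algebra parallels the i.i.d.\ analysis. Once this concentration and the bias/variance decomposition are in hand, the argument for $H_n$ is essentially identical to the i.i.d.\ case treated in the kernel conditional independence literature, and the conclusion of the lemma follows.
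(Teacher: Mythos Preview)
The paper's own proof is a single sentence: it invokes Lemmas 3.3 and 3.4 of \cite{biswas2022consistent} and stops. Your proposal goes well beyond this by actually sketching the argument those lemmas presumably contain, and the overall strategy---WLLN for empirical second moments under mixing followed by continuous mapping for part (1), and a bias/variance decomposition of the regularized kernel operator for part (2)---is the right one and consonant with the cited reference and with the Fukumizu et al.\ analysis on which the Hilbert--Schmidt criterion is based.

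One genuine gap: in part (1) you justify nonsingularity of the conditioning submatrix of $\Sigma$ by appealing to faithfulness of the SCM. Faithfulness says only that every conditional independence in the distribution is entailed by $d$-separation in $G$; it does not by itself guarantee that arbitrary principal submatrices of the covariance are invertible. What you need here is simply that the joint covariance of $(X_{u,s},X_{v,2\tau+1},\bm X_S)$ is positive definite, which is a separate (and standard) regularity assumption in the Gaussian partial-correlation setting. Without it the map $\Sigma\mapsto\rho$ is not continuous at the population value and the continuous-mapping step breaks down, so this should be stated as an assumption rather than deduced from faithfulness.

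A smaller point on part (2): writing $\|\hat K - K\|_{\mathrm{op}}=O_P(n^{-1/2})$ is loose, since $\hat K$ is an $n\times n$ Gram matrix while the limiting object is an operator on the RKHS; the precise statement is that the empirical cross-covariance operators converge to their population counterparts in Hilbert--Schmidt norm at rate $O_P(n^{-1/2})$ under the stated mixing and moment conditions. With that reformulation your amplification-by-$\epsilon_n^{-3/2}$ bookkeeping and the resulting window $n^{-1/3}\ll\epsilon_n\ll 1$ are correct and match the i.i.d.\ analysis.
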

\begin{proof}
The proof directly follows from Lemmas 3.3 and 3.4 in \cite{biswas2022consistent}. 
\end{proof}
\begin{corollary}
Under Assumption \hyperref[rhoassumptions]{1} or Assumption \hyperref[alphaassumptions]{$1^*$}, using sample CITS based on either of these two conditional dependence tests: 1) partial correlation for Gaussian regime and 2) Hilbert-Schimdt conditional dependence criterion for non-Gaussian regime, leads to asymptotically accurate estimation of the DAG $G$ and Rolled DAG $G_R$ (see Theorem \ref{thm:main}). This is because, the sample z-transformed partial correlation for the Gaussian regime and the Hilbert-Schimdt conditional dependence criterion for the non-Gaussian regime, form consistent estimators to their corresponding conditional dependence measures (see Lemma \ref{lemma:consistency}).
\end{corollary}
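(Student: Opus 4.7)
The plan is to reduce the sample-based claim to the oracle result in Theorem~\ref{thm:CITS_oracle} by showing that, with probability tending to one, every conditional (in)dependence test executed by the sample CITS algorithm returns the same answer as the corresponding CI oracle. Once this matching holds uniformly over the (finite) collection of tests that the algorithm can possibly invoke, the output $\hat{G}_n$ coincides with the graph produced by CITS-Oracle, and Theorem~\ref{thm:CITS_oracle} immediately yields recovery of $G$ (exactly on non-concurrent edges, up to the Markov equivalence class on concurrent edges); the rolled graph conclusion then follows from Definition~\ref{def:rolledgraph}, since $\hat{G}_{n,R}$ is a deterministic functional of $\hat{G}_n$.

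To carry out the reduction, I would first list the family of candidate tests. Because $p$ and $\tau$ are fixed, the algorithm only considers pairs $\bigl((u,s),(v,2\tau+1)\bigr)$ with $u,v\in\{1,\dots,p\}$ and $s\in\{\tau+1,\dots,2\tau+1\}$, conditioned on subsets $S$ of the $2\tau+1$-wide window. This gives only finitely many triples $(u,s,v,S)$. For each such triple I would argue two directions. On the null side, if $X_{u,s}\ind X_{v,2\tau+1}\mid \bm{X}_S$, then $\mu(X_{u,s},X_{v,2\tau+1}\mid \bm{X}_S)=0$ and the consistency of $\hat{\mu}_n$ (invoking Lemma~\ref{lemma:consistency} under Assumption~1 or $1^*$, applied to the time-windowed samples $\chi_k$ of Algorithm~\ref{alg:CITS_sample}) implies $|\hat{\mu}_n|\le \gamma$ eventually with high probability. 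On the alternative side, faithfulness guarantees $\mu\neq 0$ whenever the corresponding CI fails, so there exists a strictly positive minimum value $\mu^\star$ of $|\mu|$ over the finite alternative collection; choosing (or assuming, as the theorem allows) $\gamma<\mu^\star$ and again using consistency of $\hat{\mu}_n$, the sample test rejects with probability tending to one.

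A union bound over the finitely many triples $(u,s,v,S)$ then shows that the event ``every sample test agrees with its oracle counterpart'' has probability tending to one. On this event the sample algorithm produces exactly the oracle output, so by Theorem~\ref{thm:CITS_oracle} the estimated unrolled DAG recovers $G$ in the sense claimed, and the rolled graph follows by projecting along $s\in\{\tau+1,\dots,2\tau\}$ as in the final lines of Algorithm~\ref{alg:CITS_oracle}.

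The main obstacle is justifying the consistency of $\hat{\mu}_n$ on time-windowed samples that are not i.i.d. The windows $\chi_k$ are $(2\tau+1)$-block slices of a stationary process, so they inherit serial dependence, and uniform control of $\hat{\mu}_n - \mu$ over the finite family of tests requires the mixing hypotheses (Assumption~1 or $1^*$) together with the moment conditions and the regularisation scaling $n^{-1/3}\ll \epsilon_n \ll 1$ for the HSIC case. A secondary subtlety is that faithfulness alone ensures $\mu\neq 0$ under the alternative but not a uniform lower bound; the finiteness of the alternative collection (inherited from fixed $p,\tau$ and the bounded window) is what upgrades this to a strictly positive $\mu^\star$ and legitimises the fixed threshold $\gamma$. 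Once these two ingredients are in place, the reduction to Theorem~\ref{thm:CITS_oracle} is essentially combinatorial and produces both the unrolled and rolled consistency claims simultaneously.
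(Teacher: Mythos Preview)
Your proposal is correct and matches the paper's argument in substance: the paper proves the corollary simply by invoking Theorem~\ref{thm:main} together with Lemma~\ref{lemma:consistency}, whereas you have unpacked the proof of Theorem~\ref{thm:main} itself (finite test family, union bound over type~I/II errors, positive minimum $\mu^\star$ from faithfulness plus finiteness, and reduction to the oracle via Theorem~\ref{thm:CITS_oracle}). The mathematical content is identical; the only difference is organizational---the paper modularizes by first establishing the general consistency theorem and then specializing via the lemma, while you give the direct combined argument.
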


\subsection{Proof of Lemma \ref{lemma:concept}}\label{proofLem3.1}
To see that (1) implies (2), let (2) be false, i.e. $X_{v,t}$ and $X_{u,s}$ are adjacent in $G$. Then by the time order it follows that $X_{u,s}\rightarrow X_{v,t}$, which implies (1) is false. 
    
(2) implies (1) holds trivially. 
    
Next, suppose that (2) holds, i.e. $X_{v,t}$ and $X_{u,s}$ are non-adjacent in $G$. Then $X_{v,t}$ and $X_{u,s}$ are d-separated by the set of their parents in $G$. Next, note that the parents of $X_{v,t}$ and $X_{u,s}$ are between times $\{t-\tau,\ldots,t-1\}$ and $\{s-\tau,\ldots,s-1\}$ respectively and hence, the parents of both $X_{v,t}$ and $X_{u,s}$ are within times $\{t-2\tau,\ldots,t-1\}$. This implies that $X_{v,t}$ and $X_{u,s}$ are d-separated by $S_0:= \text{pa}(v,t)\cup \text{pa}(u,s) \subseteq \{(d,r):d=1,\ldots,p; r= t-2\tau,\ldots,t-1\}$. Since the SCM implies directed Markov property \cite{lauritzen1996graphical}, so it follows that $X_{v,t}\ind X_{u,s} \vert \bm{X}_{S_0}$. We thus showed that (2) implies (3).

Finally, let (3) hold. Then under faithfulness, it follows that $X_{v,t}$ and $X_{u,s}$ are d-separated in $G$ by $S$ which implies (2). 

\subsection{Proof of Theorem \ref{thm:CITS_oracle}}\label{proofThmOracle}
    Since the DAG is time invariant, in order to obtain $G$, it suffices to obtain parental sets of variables $\text{pa}(v,t)$ at a fixed time $t=2\tau+1$ and $v=1,\ldots,p$. This justifies to inputting times $t=1,\ldots,2\tau+1$ to the algorithm \ref{alg:CITS_oracle} to obtain $G$.

    Line 1 of algorithm \ref{alg:CITS_oracle} initializes the DAG. For any $u,v=1,\ldots,p$, if there is an edge in $G$ from $X_{u,s}$ to $X_{v,t}$, then $s\leq t$ holds, due to time order. This justifies initializing with only edges $X_{u,s}\rightarrow X_{v,t}$ with $s\leq t$.
    
    Lines 2-9 consider each possible edge $X_{u,s}\rightarrow X_{v,2\tau+1}$ and deletes the edge if for some $S\subseteq\{(d,r):d=1,\ldots,p; r=1,\ldots,2\tau+1\}$, $X_{u,s}\ind X_{v,2\tau+1}\vert \bm{X}_S$, where $\bm{X}_{S} = \{X_{d,r}: (d,r)\in S\}$. This is justified by the implication (3)$\implies$ (1) in Lemma \ref{lemma:concept} concluding that $X_{u,s}\not\in \text{pa}(v,2\tau+1)$.
    
    In fact, by Lemma \ref{lemma:concept} implication (1) $\implies$ (3), it follows that for the remaining edges $X_{u,s}\rightarrow X_{v,2\tau+1}$, $X_{u,s}$ will be the parents of $X_{v,2\tau+1}$, $u=1,\ldots, p$, $s=\tau+1,\ldots,2\tau+1$. This justifies that line 10 correctly finds the parental set $\text{pa}(v,2\tau+1)$ from the remaining edges in $G$.

    Based on the parental sets $\text{pa}(v,2\tau+1)$, $v=1,\ldots,p$, line 11 directs the edges from $\text{pa}(v,2\tau+1)$ to $X_{v,2\tau+1}$ to outputs the DAG $G$ and line 12 converts $G$ to its Rolled Graph $G_R$. If \( s < 2\tau+1 \), time order from the initialization in Line 1 uniquely identifies the direction. However, when \( s = 2\tau+1 \), directionality is not resolved by the initialization; however since the SCM is assumed to satisfy faithfulness, conditional independence information would identify the concurrent edges up to their Markov equivalence class.

\subsection{Proof of Theorem \ref{thm:main}}\label{proofTh4.1}
For a pair of nodes $u,v=1,\ldots,p$ and times $s=\tau+1,\ldots,2\tau+1$ and a conditioning set $S\subseteq \mathbb{S}:=\{(d,r):d=1,\ldots,p;r=1,\ldots,2\tau+1\}\setminus\{(u,s),(v,2\tau+1)\}$, let $E_{u,v,s\vert S}$ denote an error event that occurred when testing conditional dependence of $X_{v,2\tau+1}\ind X_{u,s}\vert \bm{X}_S$, i.e.,

\[
E_{u,v,s\vert S} = E_{u,v,s\vert S}^{I} \cup  E_{u,v,s\vert S}^{II},
\]

where
\begin{multline*}
E_{u,v,s\vert S}^{I} := \{|\hat{\mu}_n(X_{u,s},X_{v,2\tau+1}\vert\bm{X}_S)| > \gamma \\\text{ and } \mu(X_{u,s},X_{v,2\tau+1}\vert\bm{X}_S) = 0\}    
\end{multline*}
\begin{multline*}
E_{u,v,s\vert S}^{II} := \{ |\hat{\mu}_n(X_{u,s},X_{v,2\tau+1}\vert\bm{X}_S)| \le \gamma \\\text{ and } \mu(X_{u,s},X_{v,2\tau+1}\vert\bm{X}_S) \neq 0\}    
\end{multline*}
denote the events of Type I error and Type II error, respectively. Thus,

\begin{align*}
&P(\text{an error occurs in the sample CITS algorithm})\\
&\leq P\left(\bigcup_{u,v,s,S\subseteq \mathbb{S}}E_{u,v,s\vert S}\right)\\
& \leq O(1)\sup_{u,v,s,S\subseteq\mathbb{S}}P(E_{u,v,s\vert S}) \numberthis\label{eqn:skelerror}   
\end{align*}
using that the cardinality of the set $\vert \{u,v,s,S\subseteq \mathbb{S}\} \vert = p^2\tau 2^{2p\tau-2}$. Then, for any $\gamma > 0$, we have:

\begin{align}\label{t1err}
    \sup_{u,v,s,S\subseteq\mathbb{S}} P(E_{u,v,s\vert S}^{I}) \le \sup_{u,v,s,S\subseteq \mathbb{S}} P(\vert \hat{\mu}_n(X_{u,s},X_{v,2\tau+1}\vert\bm{X}_S)\notag\\ - \mu(X_{u,s},X_{v,2\tau+1}\vert\bm{X}_S)\vert > \gamma) = o(1)
\end{align}
by the consistency of $\hat{\mu}_n$.

Next, we bound the type II error probability. Towards this,  let $c=\inf \{\vert\mu(X_{u,s},X_{v,t}\vert\bm{X}_S)\vert : \mu(X_{u,s},X_{v,t}\vert\bm{X}_S)\neq 0, u,v=1,\ldots,p;s=\tau+1,\ldots,2\tau+1;S\subseteq \{(d,r):d=1,\ldots,p;r=1,\ldots,2\tau+1\}\setminus\{(u,s),(v,2\tau+1)\}\}>0$, and choose $\gamma = c/2$. Then,

\begin{multline}
    \sup_{u,v,s,S\subseteq\mathbb{S}} P(E_{u,v,s\vert S}^{II})\\ =\sup_{u,v,s,S\subset \mathbb{S}} P(\vert \hat{\mu}_n(X_{u,s},X_{v,2\tau+1}|S)\vert \leq \gamma ~,\\ \hspace{1in}\mu(X_{u,s},X_{v,2\tau+1}|S)\neq 0)\\
\leq \sup_{u,v,s,S\subset \mathbb{S}} P(\vert \hat{\mu}_n(X_{u,s},X_{v,2\tau+1}|S)\\
\hspace{1in}- \mu(X_{u,s},X_{v,2\tau+1}|S) \vert \ge c/2))\\
=o(1) \hspace{2in}\label{eqn:skeltype2err}.
\end{multline}

It follows from \eqref{eqn:skelerror}, \eqref{t1err} and \eqref{eqn:skeltype2err}, that:
\begin{equation}\label{eq: main_last}
P(\text{an error occurs in the sample CITS algorithm})\rightarrow 0
\end{equation}
The event of no error ocurring in the sample CITS algorithm is same as the event that the outcome of sample CITS would be same as the CITS-Oracle which has knowledge of conditional independence information. By Theorem \ref{thm:CITS_oracle}, the CITS-Oracle outputs the true DAG $G$ and its Rolled DAG $G_R$. In summary, the event of no error occurring in the sample CITS algorithm implies is equivalent to stating that $\hat{G}=G$ and $\hat{G}_R=G_R$. The proof of Theorem \ref{thm:main} is now complete, in view of \eqref{eq: main_last}.


\section{Choice of Conditional Dependence Tests}\label{appsec:condindtests}
In this section, we describe some choices of the conditional dependence tests used in the CITS algorithm. According to Theorem \ref{thm:main}, we can use a conditional dependence test of the following form \eqref{eq:cond_test} in the CITS algorithm, 
\begin{equation}\label{eq:cond_test}
\vert\hat{\mu}_n(X_{u,s},X_{v,t}\vert\bm{X}_S)\vert>\gamma
\end{equation} to guarantee its consistency, as long as it satisfies the condition: $\hat{\mu}_n(X_{u,s},X_{v,t}\vert\bm{X}_S)$ is a consistent estimator of $\mu(X_{u,s},X_{v,t}\vert\bm{X}_S)$, the latter being a measure of conditional dependence of $X_{u,s}$ and $X_{v,t}$ given $\bm{X}_S$. In the following sections, we provide examples of such candidates for $\hat{\mu}_n(X_{u,s},X_{v,t}\vert\bm{X}_S)$ and resulting conditional dependence tests, in both the Gaussian as well as the non-Gaussian regime.
\subsection{The Gaussian Regime: Pearson's Partial Correlations}\label{sec:samplepc} 
It is popular to use partial correlations to test conditional dependence for causal inference, such as in the PC algorithm \cite{kalisch2007estimating}. The partial correlation-based conditional dependence test is applicable in the Gaussian setting. Assume that $\mathbf{Y}=(Y_1,\ldots,Y_p)$ is a $p$-dimensional Gaussian random vector, for some fixed integer $p$. For $i \neq j \in \{1,\ldots ,p\},\ \bm{k} \subseteq \{1,\ldots ,p\}
  \setminus 
\{i,j\}$, denote by $\rho(Y_i,Y_j|Y_{\bm{k}})$ the 
partial correlation between $Y_i$ and $Y_j$ given
$\{Y_r:\ r \in \bm{k}\}$. The partial correlation serves as a measure of conditional dependence in the Gaussian regime in view of the following standard property of the multivariate Gaussian distribution (see Prop. 5.2 in \cite{lauritzen1996graphical}),

\[\rho(Y_i,Y_j|Y_{\bm{k}})=0\text{ if and only if }Y_i\ind Y_j ~\vert~ \{Y_r~:\ r \in \bm{k}\}.\] 

\medskip
Denote $k=\vert \bm k\vert$ and let without loss of generality $\{Y_r;\ r \in \bm{k}\}$ be the last $k$ entries in $\bm{Y}$. Let $\Sigma:= \text{cov}(\bm{Y})$ with $\Sigma = \left(\begin{array}{cc}
    \Sigma_{11} & \Sigma_{12} \\
    \Sigma_{21} & \Sigma_{22}
\end{array}\right)$ where $\Sigma_{11}$ is of dimension $(p-k)\times (p-k)$, $\Sigma_{22}$ is of dimension $k \times k$, and $\Sigma_{11.2}=\Sigma_{11}-\Sigma_{12}\Sigma_{22}^{-1}\Sigma_{21}$. Let $\bm e_1,\ldots,\bm e_p$ be the canonical basis vectors of $\mathbb{R}^p$. It follows from \cite{muirhead2009aspects} that,
\begin{align*}
\rho(Y_i,Y_j|Y_{\bm{k}}) &= \frac{\bm{e}_i^{\top}\Sigma_{11.2}\bm{e}_j}{\sqrt{(\bm{e}_i^{\top}\Sigma_{11.2}\bm{e}_i(\bm{e}_j^{\top}\Sigma_{11.2}\bm{e}_j))}}
\end{align*}
One can calculate the sample partial correlation
$\hat{\rho}(Y_i,Y_j|Y_{\bm{k}})$ via regression or by using the following identity, with  $\hat{\Sigma}$ and $\hat{\Sigma}_{11.2}$ being the sample versions of $\Sigma$ and $\Sigma_{11.2}$.
\begin{align*}
\hat{\rho}(Y_i,Y_j|Y_{\bm{k}}) &= \frac{\bm{e}_i^{\top}\hat{\Sigma}_{11.2}\bm{e}_j}{\sqrt{(\bm{e}_i^{\top}\hat{\Sigma}_{11.2}\bm{e}_i)(\bm{e}_j^{\top}\hat{\Sigma_{11.2}}\bm{e}_j)}}
\end{align*}
For testing whether a partial correlation is zero or not, we first apply Fisher's z-transform
\begin{eqnarray}\label{ztrans}
Z_n(Y_i,Y_j|Y_{\bm{k}}):= g(\hat{\rho}(Y_i,Y_j|Y_{\bm{k}})) :=\notag\\\frac{1}{2} \log \left (\frac{1 +
    \hat{\rho}(Y_i,Y_j|Y_{\bm{k}})}{1 - \hat{\rho}(Y_i,Y_j|Y_{\bm{k}})} \right).
\end{eqnarray}

Also, let $z(Y_i,Y_j|Y_{\bm{k}}) = g(\rho(Y_i,Y_j|Y_{\bm{k}}))$. Note that $z(Y_i,Y_j|Y_{\bm{k}})=0 \Leftrightarrow \rho(Y_i,Y_j|Y_{\bm{k}})=0$, and hence, $z(Y_i,Y_j|Y_{\bm{k}}) = 0 \Leftrightarrow Y_i\ind Y_j ~\vert~ \{Y_r~:\ r \in \bm{k}\}$. That is, $z(Y_i,Y_j|Y_{\bm{k}})$ is also a measure of conditional dependence. Furthermore, $Z_n(Y_i,Y_j|Y_{\bm{k}})$ is a consistent estimator of the conditional dependence measure $z(Y_i,Y_j|Y_{\bm{k}})$ (See Lemma III.1). Hence, it can be used to construct a conditional dependence test of the form $Z_n(Y_i,Y_j|Y_{\bm{k}}) > \gamma$ for some fixed $\gamma>0$, for the CITS in the Gaussian regime.

\subsection{The Non-Gaussian Regime: The Hilbert Schmidt Criterion.}\label{sec:samplepch}

In the general non-Gaussian scenario, zero partial correlations do not necessarily imply conditional independence. In such cases, the Hilbert Schmidt criterion can be used as a convenient test for conditional dependence, which is described in more details below.

Given $\mathbb{R}$-valued random variables $X,Y$ and the random vector $\bm Z$, we propose to use the following statistic for testing the conditional dependence of $X,Y\vert \bm{Z}$ (see \cite{fukumizu2007kernel}):

\[
H_{n}(X,Y\vert\bm{Z}) = Tr[R_{\overset{..}{Y}}R_{\overset{..}{X}}-2R_{\overset{..}{Y}}R_{\overset{..}{X}}R_{\bm Z} + R_{\overset{..}{Y}}R_{\bm Z}R_{\overset{..}{X}}R_{\bm Z}]
\]
where $R_X := G_X(G_X+n\epsilon_n I_n)^{-1}, R_Y := G_Y(G_Y+n\epsilon_n I_n)^{-1}, R_{\bm Z} := G_{\bm Z}(G_{\bm Z}+n\epsilon_n I_n)^{-1}$, $G_X,G_Y,G_{\bm Z}$ being the centered Gram matrices with respect to a positive definite and integrable kernel $k$, i.e. $G_{X,ij}=<k(\cdot,X_i)-\hat{m}_X^{(n)},k(\cdot,X_j)-\hat{m}_X^{(n)}>$ with $\hat{m}_X^{(n)}=\frac{1}{n}\sum_{i=1}^n k(\cdot,X_i)$, and $\overset{..}{X} := (X,{\bm Z}), \overset{..}{Y} := (Y,{\bm Z})$. Under some regularity assumptions mentioned below, it follows from the proof of Theorem 5 in \cite{fukumizu2007kernel} that $\hat{H}_n(X,Y\vert \bm{Z})$ is a consistent estimator of $H(X,Y\vert \bm Z):= \|V_{\overset{..}{X}\overset{..}{Y}|{\bm Z}}\|^2$, where $$V_{\overset{..}{X}\overset{..}{Y}|{\bm Z}} := \Sigma_{\overset{..}{X}\overset{..}{X}}^{-1/2}(\Sigma_{\overset{..}{X}\overset{..}{Y}} - \Sigma_{\overset{..}{X}\bm Z} \Sigma_{\bm Z \bm Z}^{-1}\Sigma_{\bm Z \overset{..}{Y}}) \Sigma_{\overset{..}{Y} \overset{..}{Y}}^{-1/2}$$ and $\Sigma_{UV}$ denotes the covariance matrix of $U$ and $V$. It follows from \cite{fukumizu2007kernel} that $X\ind Y |\bm Z \Leftrightarrow H(X,Y|\bm Z)=0$, i.e. $H(X,Y|\bm Z)$ is a measure of conditional dependence of $X$ and $Y$ given $Z$. Also, it follows from Lemma III.1 that $H_n(X,Y| \bm Z)$ is a consistent estimator to $H(X,Y|\bm Z)$, and hence, $H_n(X,Y| \bm Z)$ can be used to construct a conditional dependence test in the CITS of the form: $\vert H_n(X,Y| \bm Z)\vert > \gamma$ for a fixed $\gamma>0$. This method has the advantage that unlike the Pearson partial correlation, it does not require Gaussianity of the data to decide conditional independence, and consequently can be used in CITS if the underlying time series is non-Gaussian.

\section{Two Notions of Mixing}\label{appsec:defmixing}
For fixed $u,v \in 1,\ldots, p$, let $\mathcal{F}_{a}^{b}$ be the $\sigma$-field of events generated by the random variables $\{X_{u,t},X_{v,t}:a\leq t \leq b\}$, and $L_2(\mathcal{F}_a^b)$ be the collection of all second-order random variables which are $\mathcal{F}_a^b$-measurable.

\begin{definition}[$\rho$-mixing]
In this section, we describe two common notions of mixing, that we are going to assume on our underlying time series, in order to guarantee consistency of the conditional dependence estimators. The stationary process $\{X_{u,t},X_{v,t}:t= 1,2,\ldots\}$ is called $\rho$-mixing if the maximal correlation coefficient

\begin{equation}\label{eq:rhomixing}
\xi_{uv}(k):=\sup_{\ell\geq 1}\sup_{\substack{U\in L_2(\mathcal{F}_{1}^\ell)\\ V\in L_2(\mathcal{F}_{\ell+k}^{\infty})}} \frac{\vert \text{cov}(U,V) \vert}{\text{var}^{1/2}(U)\text{var}^{1/2}(V)} \rightarrow 0 \text{ as } k\rightarrow \infty.
\end{equation}
\end{definition}

\begin{definition}[$\alpha$-mixing]
The stationary process $\{X_{u,t},X_{v,t}:t= 1,2,\ldots\}$ is called $\alpha$-mixing if:

\[
\alpha_{uv}(k):=\sup_{\ell\geq 1}\sup_{\substack{A\in L_2(\mathcal{F}_{1}^\ell)\\ B\in L_2(\mathcal{F}_{\ell+k}^{\infty})}} |P(A\cap B) - P(A) P(B)| \rightarrow 0 \text{ as } k\rightarrow \infty.
\]
\end{definition}
\section{Simulation Study Details}\label{appsec:simulstudy}
We outline the different simulation settings used in the numerical experiments.

\noindent\textbf{Linear Additive with Gaussian noise:}
\begin{itemize}
    \item Linear Gaussian Model 1: 
    \begin{multline*}
    \hspace{1in}(X_{1,t},X_{2,t},X_{3,t},X_{4,t}) = ( 1+\epsilon_{1,t},\\~-1+\epsilon_{2,t},~2X_{1,t-1}-X_{2,t-1}+\epsilon_{3,t}, ~2X_{3,t-1}+\epsilon_{4,t})
    \end{multline*}
     where $\epsilon_{i,t}, i=1,\ldots,4, t=1,2,\ldots,1000, \sim$ i.i.d. $N(0,\eta)$ with mean $0$ and standard deviation $\eta$. In this and all subsequent examples, the parameter $\eta$ is assumed to vary between $0$ to $3.5$ in increments of $0.5$ in our numerical experiments. The true Rolled graph $G_R$ for this model has the edges $1\rightarrow 3, 2\rightarrow 3, 3\rightarrow 4$.
    \item  Linear Gaussian Model 2: 
    \begin{multline*}
    (X_{1,t},X_{2,t},X_{3,t},X_{4,t}) = ( 1+\epsilon_{1,t},~-1+2X_{1,t-1}+\epsilon_{2,t},\\~2X_{1,t-1}+\epsilon_{3,t}, ~X_{2,t-1} + X_{3,t-1} +\epsilon_{4,t})
    \end{multline*}
   where $\epsilon_{i,t}, i=1,\ldots,4, t=1,2,\ldots,1000, \sim$ i.i.d. $N(0,\eta)$ with mean $0$ and standard deviation $\eta$. The true Rolled graph $G_R$ for this model has the edges $1\rightarrow 2, 1\rightarrow 3, 2\rightarrow 4, 3\rightarrow 4$.
   \end{itemize}
    
    \noindent\textbf{Non-linear Additive with Non-Gaussian noise:}
    \begin{itemize}
    \item Non-linear Non-Gaussian Model 1:     
    \begin{multline*}
    \hspace{1in}(X_{1,t},X_{2,t},X_{3,t},X_{4,t}) =  (\epsilon_{1,t},\\~\epsilon_{2,t},~4\sin(X_{1,t-1})-3\sin(X_{2,t-1})+\epsilon_{3,t},~ 3 X_{3,t-1} +\epsilon_{4,t})  
    \end{multline*}
   where $\epsilon_{i,t}, i=1,\ldots,4, t=1,2,\ldots,1000, \sim$ i.i.d. and uniformly distributed on the interval $(0,\eta)$. The true Rolled graph $G_R$ for this model has the edges $1\rightarrow 3, 2\rightarrow 3, 3\rightarrow 4$. 
    
    \item Non-linear Non-Gaussian Model 2:  
    \begin{multline*}
    (X_{1,t},X_{2,t},X_{3,t},X_{4,t}) = ( \epsilon_{1,t},~4X_{1,t-1}+\epsilon_{2,t},\\~3\sin(X_{1,t-1})+\epsilon_{3,t}, ~8\log(\vert X_{2,t-1}\vert) + 9\log(\vert X_{3,t-1}\vert) +\epsilon_{4,t})
    \end{multline*}
   where $\epsilon_{i,t}, i=1,\ldots,4, t=1,2,\ldots,1000, \sim$ i.i.d. Uniform distribution on the interval $(0,\eta)$. The true Rolled graph $G_R$ for this model has the edges $1\rightarrow 2, 1\rightarrow 3, 2\rightarrow 4, 3\rightarrow 4$.
\end{itemize}

   \noindent \textbf{Non-linear Non-additive Continuous Time Recurrent Neural Network with Gaussian noise:}
   \begin{itemize}
   \item Continuous Time Recurrent Neural Network (CTRNN) Model:  \begin{equation}\label{ctrnn}
    \tau_j \frac{dX_{j,t}}{dt}=-X_{j,t} + \sum_{i=1}^m w_{ij} \sigma (X_{i,t}) + \epsilon_{j,t}, j=1,\ldots, m,
    \end{equation}
    We consider a motif consisting of $m=4$ and $w_{13}=w_{23}=w_{34}=10$ and $w_{ij}=0$ otherwise. We also note that in Eq. \eqref{ctrnn}, $X_{j,t}$ depends on its own past. Therefore, the true Rolled graph has the edges $1\rightarrow 3,2\rightarrow 3,3\rightarrow 4, 1\rightarrow 1, 2\rightarrow 2, 3\rightarrow 3, 4\rightarrow 4$. The time constant $\tau_i$ is set to 10 units for each $i$. We consider $\epsilon_{i,t}$ to be distributed as an independent Gaussian process with mean $1$ and standard deviation $\eta$. The signals are sampled at a time gap of $g := \exp(1) \approx 2.72$ units for a total duration of $1000$ units. For simulation purposes, one may replace the continuous derivative on the left hand side of Eq. \eqref{ctrnn} is replaced by first order differences at a gap of $g$. 
\end{itemize}
\end{document}